\numberwithin{equation}{section}	% numbering equations with section
\theoremstyle{plain}
\theoremstyle{plain}
\newtheorem{Remark}{Remark}[section]
\theoremstyle{plain}
\newtheorem{Proposition}{Proposition}[section]
\theoremstyle{plain}
\newtheorem{Lemma}{Lemma}[section]
\theoremstyle{plain}
\newtheorem{Theorem}{Theorem}[section]
\theoremstyle{plain}
\theoremstyle{plain}
\newtheorem{Definition}{Definition}[section]
\theoremstyle{assumption}
\DeclareMathOperator*{\var}{\text{var}}
\titleformat{\section}[block]{\large\bfseries\centering}{\thesection.}{1em}{} % Change the look of the section titles
\titleformat{\subsection}[block]{\large\bfseries}{\thesubsection.}{1em}{} % Change the look of the section titles
\title{Optimal reinsurance problem under fixed cost and exponential preferences} % Article title
\author{%
\textsc{Brachetta M.}\thanks{Department of Mathematics, Politecnico of Milan, Piazza Leonardo da Vinci, 32, 20133 Milano, Italy.} \thanks{Corresponding author.} \\[1ex]
\normalsize \href{mailto:matteo.brachetta@unich.it}{matteo.brachetta@unich.it}
\and
\textsc{Ceci, C.}\thanks{Department of Economics, University of Chieti-Pescara, Viale Pindaro, 42 - 65127 Pescara, Italy.} \\[1ex]
\normalsize \href{mailto:c.ceci@unich.it}{c.ceci@unich.it}
}
\date{} % Leave empty to omit a date
\providecommand{\keywords}[1]{\textbf{\textit{Keywords:}} #1}
\providecommand{\msccodes}[1]{\textbf{\textit{MSC Classification codes:}} #1}
\begin{document}
\maketitle

\begin{abstract}
\noindent 
We investigate an optimal reinsurance problem for an insurance company facing a constant fixed cost when the reinsurance contract is signed. The insurer needs to optimally choose both the starting time of the reinsurance contract and the retention level in order to maximize the expected utility of terminal wealth.
This leads to a mixed optimal control/optimal stopping time problem, which is solved by a two-step procedure: first considering the pure-reinsurance stochastic control problem and next discussing a time-inhomogeneous optimal stopping problem with discontinuous reward. Using the classical Cram\'er-Lundberg approximation risk model, we prove that the optimal strategy  is deterministic and depends on the model parameters. In particular, we show that there exists a maximum fixed cost that the insurer is willing to pay for the contract activation. Finally, we provide some economical interpretations and numerical simulations.
\end{abstract}

\noindent\keywords{Optimal Reinsurance; Mixed Control Problem; Optimal Stopping; Transaction Cost.}\\
%\noindent\jelcodes{G220, C610.}\\
%% G220 Insurance; Insurance Companies; Actuarial Studies
%% C610 Optimization Techniques; Programming Models; Dynamic Analysis
%% G110 Portfolio Choice; Investment Decisions
\noindent\msccodes{93E20, 91B30, 60G40, 60J60.}\\
%% 93E20 Optimal stochastic control, 91B30 Risk theory, insurance, 91G80 Financial applications of other theories (stochastic control, calculus of variations, PDE, SPDE, dynamical systems), 60G35 Signal detection and filtering (nella 

% PROBABILITY THEORY AND STOCHASTIC PROCESSES), 60G57 Random measures, 60J75 Jump processes, 93E11 Filtering (nella sezione SYSTEMS THEORY; CONTROL)
%\noindent \textit{Declarations of interest: none.}

%--------------------------------------------------------------------------------
%	INTRODUCTION
%--------------------------------------------------------------------------------

\section{Introduction}

Insurance business requires the transfer of risks from the policyholders to the insurer, who receives a risk premium as a reward. In some cases, it could be convenient to cede these risks to a third party, which is the reinsurance company. From the operational viewpoint, a risk-sharing agreement helps the insurer reducing unexpected losses, stabilizing operating results, increasing business capacity and so on. By means of a reinsurance treaty, the reinsurance company agrees to idemnify the primary insurer (cedent) against all or part of the losses which may occur under policies which the latter issued. The cedent will pay a reinsurance premium in exchange for this service. Roughly speaking, this is an insurance for insurers. When subscribing a reinsurance treaty, a natural question is to determine the (optimal) level of the retained losses.
Optimal reinsurance problems have been intensively studied by many authors under different criteria, especially through expected utility maximization and ruin probability minimization, see for example \cite{definetti:1940}, \cite{buhlmann:1970}, \cite{gerber69}, \cite{irgens_paulsen:optcontrol}, \cite{BC:IME2019} and references therein. \\

The main novelty of this article is that subscription costs are considered. In practice, when the agreement is signed, a fixed cost is usually paid in addition to the reinsurance premium. This aspect has not been investigated by nearly all the studies, except for \cite{egami} and \cite{lietal2015}. 
In the former work the authors discussed  the reinsurance problem subject to a fixed cost for buying reinsurance and a time delay in completing the reinsurance transaction. They solved the problem considering a performance criterion with linear current reward and showed that it is optimal to buy reinsurance when the surplus lies in a bounded interval depending on the delay time. In the latter paper, under the criterion of minimizing the ruin probability, the original problem is reduced to a time-homogeneous optimal stopping problem. In particular, the authors show that the fixed cost forces the insurer to postpone buying reinsurance until the surplus process hits a certain level.

Hence the presence of a fixed cost is closely related to the possibility of postponing the subscription of the reinsurance agreement. This, in turn, involves an optimal stopping problem, which is attached to the optimal choice of the retention level, which is a well known stochastic control problem.
The novelty of our paper consists in considering this mixed stochastic control problem under the criterion of maximizing the expected utility of terminal wealth. The strategy of the insurance company consists of the retention level of a proportional reinsurance and the subscription timing. When the contract is signed, a given fixed cost is paid and the optimal retention level is applied. We use a diffusion approximation to model the insurer's surplus process (see \cite{schmidli:2018risk}).
%In this paper, we investigate a dynamic optimal reinsurance problem in presence of both proportional and fixed costs.
The insurance company has exponential preferences and is allowed to invest in a risk-less bond.
% utility is of  optimally chooses the initial time of the contract and the retention level to apply in order to maximize the expected utility of terminal wealth. %under exponential preferences. 
%We consider the classical Cram\'er-Lundberg diffusion approximation model and  proportional reinsurance. 

As already mentioned, this setup leads to a combined problem of optimal stopping and stochastic control with finite horizon,  which we will solve by a two-step procedure. First, we provide the solution of the pure reinsurance problem (with starting time equal to zero). Next, we discuss an optimal stopping time problem with a suitable reward function depending on the value function of the pure reinsurance problem. 
Differently to \cite{egami} and \cite{lietal2015}, the associated optimal stopping problem turns out to be time-inhomogeneous and with discontinuous stopping reward with respect to the time. We provide an explicit solution, also showing  that the optimal stopping time is deterministic. Moreover, we find that only two cases possible, depending on the model parameters. When the fixed cost is greater than a suitable threshold (whose analytical expression is available), the optimal choice is not to subscribe the reinsurance; otherwise,  the insurer immediately subscribes the contract.

The paper is organized as follows. In Section \ref{section:formulation}, we describe the model and formulate the problem as a mixed stochastic control problem, that is a problem which involves both optimal control and stopping.  In Section \ref{sec:reins} we discuss the pure reinsurance problem (without stopping) by solving  the associated Halmilton-Jacobi-Bellman equation. Section \ref{sec:red} is devoted to the reduction of the original (mixed) problem to a suitable optimal stopping problem, which is then investigated in Section \ref{sec:optstop}. Here we provide a Verification Theorem and we solve the associated variational inequality. In Section \ref{sec:solution} we give the explicit solution to the original problem and we discuss some economic implications of our results. Finally, in Section \ref{sec:num} some numerical simulations are performed in order to better understand the economic interpretation of our findings.

%--------------------------------------------------------------------------------
%	PROBLEM FORMULATION
%--------------------------------------------------------------------------------

\section{Problem formulation}
\label{section:formulation}

\subsection{Model formulation}

Let $T>0$ be a finite time horizon and assume that $(\Omega,\mathcal{F},\mathbb{P},\mathbb{F})$ is a complete probability space endowed with a filtration $\mathbb{F}\doteq\{\mathcal{F}_t\}_{t\in [0,T]}$ satisfying the usual conditions. 

Let us denote by $R=\{R_t\}_{t\in [0,T]}$ the surplus process of an insurance company. 
There is a wide range of risk models in the actuarial literature, see for instance \cite{grandell:risk} and \cite{schmidli:2018risk}. In the Cram\'er-Lundberg risk model the claims arrival times are described by the sequence of claims arrival times $\{T_n\}_{n\geq1}$, with $T_n<T_{n+1}$ $\mathbb{P}$-a.e. $\forall n\ge1$, while the corresponding claim sizes are given by $\{Z_n\}_{n\geq1}$. In particular, the number of occurred claims up to time $t\ge0$ is equal to
\[
N_t =  \sum_{n=1} \mathbbm{1}_{\{T_n \leq t\}},
\]
and it is assumed to be a Poisson process with constant intensity $\lambda >0$, independent of the sequence $\{Z_n\}_{n\geq1}$. Moreover, $\{Z_n\}_{n\geq1}$ are i.i.d. random variables with common probability distribution function $F_Z(z)$, $z \in (0, + \infty)$, having finite first and second moments denoted by $\mu>0$ and $\mu_2>0$, respectively. In this context the surplus process is given by
\begin{equation}
\label{eqn:CLmodel}
  R_0 + ct - \sum_{n=1}^{N_t}Z_n , \qquad R_0>0 ,
\end{equation}
where $R_0$ is the initial capital and $c>0$ denotes the gross risk premium rate. We can show that for any $t\ge0$
\[
\mathbb{E}\biggl[ \sum_{n=1}^{N_t}Z_n \biggr] = \lambda\mu t
\qquad \text{and} \qquad 
\var\biggl[ \sum_{n=1}^{N_t}Z_n \biggr] = \lambda\mu_2 t .
\]

In this paper we use the diffusion approximation of the Cram\'er-Lundberg model \eqref{eqn:CLmodel}, see for example \cite{grandell:risk}. Precisely, we assume that the surplus process follows this stochastic differential equation (SDE):
\begin{equation*}
%\label{eqn:R_t}
dR_t = p\,dt + \sigma_0 \, dW_t, \qquad R_0>0,
\end{equation*}
where $W=\{W_t\}_{t\in [0,T]}$ is a standard Brownian motion, $\sigma_0=\sqrt{\lambda\mu_2}$ and $p$ denotes the insurer's net profit, that is $p=c-\mu\lambda$. In particular, under the expected value principle we have that $c=(1+\theta_i)\mu\lambda$ and hence $p=\theta_i\mu\lambda$, with $\theta_i>0$ representing the insurer's safety loading.\\

We allow the insurer to invest her surplus in a risk-free asset with constant interest rate $R>0$:
\[
dB_t = B_t R dt, \quad B_0=1 , 
\]
hence the wealth process $X=\{X_t\}_{t\in [0,T]}$ evolves according to
\begin{equation}\label{W1}
dX_t= R X_t dt + p\,dt + \sigma_0 \, dW_t, \qquad X_0 =R_0>0.
\end{equation}
The explicit solution of the SDE \eqref{W1} is given by the following equation: 
\begin{equation}\label{X}
X_t = R_0 e^{Rt} + \int_0^t e^{R(t-s)} p ds + \int_0^t e^{R(t-s)} \sigma_0 dW_s, \qquad t\in[0,T].
\end{equation}

Now let $\tau$ denote an $\mathbb{F}$-stopping time. At time $\tau$ the insurer can subscribe a proportional reinsurance contract with retention level $u\in[0,1]$, transferring part of her risks to the reinsurer. More precisely, $u$ represents the percentage of retained losses, so that $u=0$ means full reinsurance, while $u=1$ is equivalent to no reinsurance. In order to buy a reinsurance agreement, the primary insurer pays a reinsurance premium $q(u) \geq 0$. When the reinsurance contract is signed at time $t=0$, the Cram\'er-Lundberg risk model \eqref{eqn:CLmodel} is replaced by the following equation:
\[
  R_0 + (c - q(u)) t - \sum_{n=1}^{N_t} u Z_n , \qquad R_0>0.
\]

Under the expected value principle we have that $q(u) = (1+\theta)(1-u)\mu\lambda$, $u \in [0,1]$, with the reinsurer's safety loading $\theta $ satistying $\theta> \theta_i$ (preventing the insurer from gaining a risk-free profit).
%\begin{assumption}(Reinsurance premium)\label{def:reinsurance_premium}We assume $q(u)\colon [0,1]\to[0,+\infty)$ to be differentiable and decreasing in $u \in [0,1]$. In addition to this, we take $q(1)=0$ and $q(0)>c$ (preventing the insurer from gaining a risk-free profit).\end{assumption}

Let us denote by $R^u=\{R^u_t\}_{t\in [0,T]}$ the reserve process in the Cram\'er-Lundberg approximation associated with a given reinsurance strategy $\{u_t\}_{t\in [0,T]}$ when the reinsurance contract is signed at time $t=0$. Following \cite{eisenbergschmidli:2009},  under the expected value principle,  $R^u$  follows
\begin{equation}
\label{eqn:R^u}
dR^u_t = (p-q + qu_t)\,dt + \sigma_0 u_t \,dW_t , \qquad R^u_0=R_0,
\end{equation}
where $q = \theta\lambda\mu$ denotes the reinsurer's net profit. We set $q>p$ (non-cheap reinsurance).
The wealth process under the strategy $\{u_t\}_{t\in [0,T]}$ evolves according to this SDE:
\begin{equation}\label{W2}
dX^u_t= R X^u_t dt + dR^u_t, \qquad X^u_0 = R_0,
\end{equation}
which admits this explicit representation:
\begin{equation}\label{Xu}
X^u_t = R_0e^{Rt} + \int_0^t e^{R(t-s)}(p-q + qu_s)\,ds
+\int_0^t e^{R(t-s)} \sigma_0 u_s \,dW_s .
\end{equation}

We assume that a constant fixed cost $K>0$ is paid when the reinsurance contract is subscribed. The insurer decides when the reinsurance contract starts and which retention level is applied. Hence the insurer's strategy is a couple $\alpha = ( \tau, \{u_t\}_{t\in [\tau,T]})$, with $\tau \leq T$.  Let $H_t = I_{\{\tau \leq t\}}$ be the indicator process of the contract starting time. For $\tau < T$ $\mathbb{P}$-a.s. , the total wealth $X^\alpha = \{X^\alpha_t\}_{t\in [0,T]}$ associated with a given strategy $\alpha$ is given by 

\begin{equation}\label{W}
dX^\alpha_t = (1- H_t) dX_t + H_t dX^u_t - K dH_t, \qquad  X^\alpha_0 =R_0 >0,
\end{equation}
while on the event  $\{ \tau = T \}$ we have that 
\begin{equation}\label{W0}
dX^\alpha_t = dX_t,  \qquad  X^\alpha_0 =R_0 >0, 
\end{equation}
where $X$ satisfies equation \eqref{W1}.

Equation \eqref{W} can be written more explicitly as
\begin{equation}
\label{wealth_alpha}
dX^\alpha_t =
\begin{cases}
	dX_t,  & t  < \tau, \quad  X_0 = R_0,
	\\
	dX^u_t, & \tau < t \leq T, \quad X^u_\tau =  X_\tau - K, 
\end{cases}
\end{equation}
where $X$ and $X^u$ satisfy equations \eqref{W1} and \eqref{W2}, respectively.

In our setting the null reinsurance corresponds to the choice $\tau = T$, $\mathbb{P}$-a.s., to which we associate the strategy $\alpha_\text{null} = ( T, 1 )$ and
\[
X^{\alpha_\text{null} }_t = X_t, \qquad t \in [0,T].
\]

\subsection{The utility maximization problem}

The insurers' objective is to maximize the expected utility of the terminal wealth:
\begin{equation}
\label{eqn:max_pb}
\sup_{\alpha\in\mathcal{A}}{\mathbb{E}\bigl[ U(X_T^{\alpha}) \bigr]},
\end{equation}
where $U:\mathbb{R}\to[0,+\infty)$ is the utility function representing the insurer's preferences and 
$\mathcal{A}$ the class of admissible strategies (see Definition \ref{def_A} below). 

We focus on CARA (\textit{Constant Absolute Risk Aversion}) utility functions, whose general expression is given by
$$
U(x) = 1-e^{-\eta x}, \qquad x\in\mathbb{R},
$$
where $\eta>0$ is the risk-aversion parameter. This utility function is highly relevant in economic science and particularly in insurance theory. Indeed, it is commonly used for reinsurance problems (e.g. see \cite{BC:IME2019} and references therein).\\
 
The optimization problem is a mixed optimal control problem. That is,  the insurer's controls involve the timing of the reinsurance contract subscription and the retention level to apply. 

\begin{Definition}[Admissible strategies] \label{def_A}
We denote by $\mathcal{A}$ the set of admissible strategies  $\alpha = ( \tau, \{u_t\}_{t \in [\tau,T]} )$, where
$\tau$ is an $\mathbb{F}$-stopping time such that $\tau \leq T$ and $\{u_t\}_{t \in [\tau,T]}$ is  an $\mathbb{F}$-predictable process with values in $[0,1]$. Let us observe that the  null strategy $\alpha_\text{null} = ( T, 1 )$ is included in $\mathcal{A}$. When we want to restrict the controls to the time interval $[t,T]$, we will use the notation $\mathcal{A}_t$.
\end{Definition}

\begin{Proposition}
Let $\alpha\in\mathcal{A}$, then  
$$\mathbb{E}\bigl[ e^{-\eta X^{\alpha}_T} \bigr] < +\infty. $$
\end{Proposition}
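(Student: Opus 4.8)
The plan is to reduce everything to controlling $\mathbb{E}[e^{-\eta X_T^\alpha}]$ by an explicit bound that is uniform over $\mathcal{A}$, exploiting the fact that $X_T^\alpha$ is, conditionally on the stopping time $\tau$ and the control $u$, a Gaussian random variable with a drift and variance that are bounded by deterministic constants. First I would write $X_T^\alpha$ using the representation \eqref{wealth_alpha}: on $\{\tau = s\}$ for $s < T$ we have $X_\tau = X_s$ given by \eqref{X}, then the post-$\tau$ dynamics \eqref{Xu} started from $X_s - K$, so that
\[
X_T^\alpha = e^{R(T-s)}(X_s - K) + \int_s^T e^{R(T-r)}(p - q + q u_r)\,dr + \int_s^T e^{R(T-r)}\sigma_0 u_r\,dW_r,
\]
while on $\{\tau = T\}$ we simply have $X_T^\alpha = X_T$. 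The key observation is that $-\eta X_T^\alpha$ can be bounded above: the deterministic parts (the $e^{R\cdot}$ factors, the drift integrals with $u_r\in[0,1]$, the fixed cost $+\eta K$, and the initial-capital term $-\eta R_0 e^{RT} \le 0$) are all dominated by constants depending only on $T, R, p, q, \eta, R_0$, while the two stochastic integrals $\int_0^{\,\cdot} e^{R(T-r)}\sigma_0\,dW_r$ and $\int_s^T e^{R(T-r)}\sigma_0 u_r\,dW_r$ have quadratic variation bounded by $\sigma_0^2 \int_0^T e^{2R(T-r)}\,dr =: C_0$, uniformly in $s$ and in the $[0,1]$-valued control $u$.

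Next I would make this rigorous by decoupling the stochastic integral from the stopping time. The cleanest route is to first bound $e^{-\eta X_T^\alpha}$ pathwise by
\[
e^{-\eta X_T^\alpha} \le C_1 \exp\Big( \eta \sup_{0\le t\le T} \Big| \int_0^t e^{R(T-r)}\sigma_0\,dW_r \Big| + \eta \sup_{0\le t\le T}\Big| \int_t^T e^{R(T-r)}\sigma_0 u_r\,dW_r\Big| \Big)
\]
for a deterministic constant $C_1$; then observe that the second supremum is itself dominated (via $u_r \in [0,1]$ and elementary estimates, or by enlarging the process to one not depending on $u$) by a quantity of the form $\sup_{0 \le t \le T} |M_t| + \sup_{0 \le t \le T}|M_T - M_t| \le 3\sup_{0\le t\le T}|M_t|$ where $M_t = \int_0^t e^{R(T-r)}\sigma_0\,dW_r$ is a fixed Gaussian martingale with $M_T \sim \mathcal{N}(0, C_0)$. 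Hence
\[
\mathbb{E}\bigl[ e^{-\eta X_T^\alpha}\bigr] \le C_1\, \mathbb{E}\Big[ \exp\Big( c\, \eta \sup_{0\le t\le T}|M_t|\Big)\Big]
\]
for an absolute constant $c$, and the right-hand side is finite and independent of $\alpha$. Finiteness of this last expectation follows from the reflection principle / Doob's maximal inequality together with the Gaussian tail of $M_T$ (equivalently, $\sup_{t\le T}|M_t|$ has a Gaussian-type tail, so all exponential moments are finite).

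The main obstacle is purely bookkeeping: handling the stopping time $\tau$ correctly, since $u_r$ for $r > \tau$ is only defined and predictable on $[\tau,T]$ and $\tau$ is itself $\mathbb{F}$-measurable, so one cannot naively treat $\int_\tau^T e^{R(T-r)}\sigma_0 u_r\,dW_r$ as independent of $\tau$. The way around this is to extend $u_r$ to all of $[0,T]$ by setting it equal to $0$ (or any fixed value) on $[0,\tau)$ — which does not change $X_T^\alpha$ because that part of the integral is multiplied by $H_r = \mathbbm{1}_{\{\tau \le r\}}$ anyway — and then bound $\big|\int_0^T e^{R(T-r)}\sigma_0 (H_r u_r)\,dW_r\big|$ by $3\sup_{t\le T}|M_t|$ using that $H_r u_r$ is predictable with values in $[0,1]$ and the elementary inequality $|\int_0^T f_r g_r\,dW_r|$ with $|g_r|\le 1$ is pathwise dominated via two applications of the bound on increments of $\int_0^{\,\cdot} f_r\,dW_r$. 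Once the $\tau$-dependence has been absorbed into a control that ranges over predictable $[0,1]$-valued processes only, the estimate is uniform and the proof concludes as above.
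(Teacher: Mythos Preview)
Your overall strategy---bounding the deterministic drift terms by constants and then controlling the stochastic integrals---is the right idea, but there is a genuine gap at the key step. You claim that $\bigl|\int_0^T f_r g_r\,dW_r\bigr|$ with $g$ predictable and $|g_r|\le 1$ is pathwise dominated by a constant multiple of $\sup_{t\le T}\bigl|\int_0^t f_r\,dW_r\bigr|$, and use this to absorb the control $u$ into the fixed martingale $M$. This ``elementary inequality'' is false: because Brownian motion has infinite variation, a $[0,1]$-valued predictable integrand can extract far more than the running supremum of the unweighted integral. For instance, with $f\equiv 1$ and $g_r=\mathbbm{1}_{\{W_r>0\}}$, Tanaka's formula gives $\int_0^T g_r\,dW_r=W_T^+-\tfrac12 L_T^0$, and on the positive-probability event $\{\sup_{t\le T}|W_t|<\varepsilon\}$ the occupation-time identity forces $L_T^0$ to be of order $1/\varepsilon$; hence the ratio $\bigl|\int_0^T g_r\,dW_r\bigr|/\sup_t|W_t|$ is a.s.\ finite but unbounded. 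The same objection kills the suggestion of ``enlarging the process to one not depending on $u$'': one cannot replace $u_r$ by $1$ inside a stochastic integral, since $dW_r$ changes sign.

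The repair is straightforward and is exactly what the paper does: abandon pathwise comparison and use instead that the martingale $N_t=\int_0^t e^{R(T-r)}\sigma_0 H_r u_r\,dW_r$ has the deterministic quadratic-variation bound $\langle N\rangle_T\le C_0\doteq\sigma_0^2\int_0^T e^{2R(T-r)}\,dr$. Then $\exp\bigl(\lambda N_T-\tfrac{\lambda^2}{2}\langle N\rangle_T\bigr)$ is a supermartingale with expectation at most $1$, so $\mathbb{E}[e^{\lambda N_T}]\le e^{\lambda^2 C_0/2}$ for every $\lambda$, uniformly in $\alpha\in\mathcal{A}$. The paper handles the pre-$\tau$ integral $\int_0^\tau e^{R(T-r)}\sigma_0\,dW_r$ by the same device (rather than via your supremum, which for that term \emph{would} be legitimate), and separates the two random exponential factors by the inequality $ab\le\tfrac12(a^2+b^2)$ before bounding each second moment. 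No running supremum is needed anywhere.
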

\begin{proof} 
Using equations \eqref{Xu} and \eqref{wealth_alpha}, we have that
\[
\begin{split}
\mathbb{E}\bigl[ e^{-\eta X^{\alpha}_T} \bigr]  & = \mathbb{E}\bigl[ e^{-\eta X_T} I_{\{\tau = T\}} \bigr]  \\
&+ \mathbb{E}\bigl[ e^{-\eta (X_\tau - K) e^{R(T- \tau)} } 
e^{- \eta \int_\tau^T e^{R(T-s)} (p-q + q u_s) ds} e^{- \eta \int_\tau^T e^{R(T-s)} \sigma_0 u_s dW_s} I_{\{\tau < T\}} \bigr]  .
\end{split}
\]
Taking into account the expression \eqref{X} we get 
\[
\begin{split}
\mathbb{E}\bigl[ e^{-\eta X_T} I_{\{\tau = T\}} \bigr] 
&\leq \mathbb{E}\bigl[  e^{-\eta R_0 e^{R T}} 
e^{- \eta \int_0^T e^{R(t-s)} p ds } e^{- \eta \int_0^T  e^{R(t-s)} \sigma_0 dW_s} \bigr]  \\
&\leq 	\mathbb{E}\bigl[ e^{- \eta \int_0^T  e^{R(t-s)} \sigma_0 dW_s} \bigr]  \\
&=		e^{ \frac{\eta^2}{2} \int_0^T  e^{2R(t-s)} \sigma^2_0 ds} < +\infty,
 \end{split}
\]
and denoting by $C$ a generic constant (possibly different from each line to another)
\[
\begin{split}
&\mathbb{E}\bigl[ e^{-\eta (X_\tau - K) e^{R(T- \tau)} } 
e^{- \eta \int_\tau^T e^{R(T-s)} (p-q + q u_s) ds} e^{- \eta \int_\tau^T e^{R(T-s)} \sigma_0 u_s dW_s} I_{\{\tau < T\}} \bigr] \\
&\leq C\times  \mathbb{E}\bigl[ e^{-\eta (X_\tau - K) e^{R(T- \tau)} } e^{- \eta \int_\tau^T e^{R(T-s)} \sigma_0 u_s dW_s} I_{\{\tau < T\}} \bigr] \\
& \leq C \times  \Big(  \mathbb{E}\bigl[ e^{- 2 \eta (X_\tau - K) e^{2R(T- \tau)} } I_{\{\tau < T\}}  \bigr]  +   \mathbb{E}\bigl[  e^{- 2 \eta \int_\tau^T e^{R(T-s)} \sigma_0 u_s dW_s} I_{\{\tau < T\}} \bigr] \Big) \\
&\leq C \times  \Big(  \mathbb{E}\bigl[ e^{- 2 \eta X_\tau e^{2R(T- \tau)} } \bigr] + \mathbb{E}\bigl[  e^{- 2 \eta \int_\tau^T e^{R(T-s)} \sigma_0 u_s dW_s} I_{\{\tau < T\}} \bigr]  \Big) \\
&\leq C \times  \Big(  \mathbb{E}\bigl[ e^{- 2 \eta e^{2R(T- \tau)} \int_0^\tau  e^{R(t-s)} \sigma_0 dW_s} \bigr]  + 
\mathbb{E}\bigl[  e^{2 \eta^2 \int_\tau^T e^{2R(T-s)} \sigma^2_0 u^2_s I_{\{\tau < T\}}ds} \bigr] \Big) \\
&\leq C \times  \Big( e^{2 \eta^2 e^{4RT} \int_0^T e^{2R(t-s)} \sigma^2_0 ds}  + e^{2 \eta^2 \int_0^T e^{2R(T-s)} \sigma^2_0 ds} \Big)  < + \infty.
\end{split}
\]
\end{proof}

%Using equations \eqref{Xu} and \eqref{wealth_alpha}, we have that
%\[
%\begin{split}
%\mathbb{E}\bigl[ e^{-\eta X^{\alpha}_T} \bigr] 
%&= \mathbb{E}\biggl[ e^{-\eta R_0e^{Rt}} e^{-\eta \int_0^t e^{R(t-s)}(p-q + qu_s)\,ds }
%e^{-\eta \int_0^t e^{R(t-s)} \sigma_0 u_s \,dW_s } \biggr] \\
%&\le C\,\mathbb{E}[ e^{-\eta e^{RT} \sigma_0 W_T } ] < +\infty ,
%\end{split}
%\]
%for a suitable constant $C>0$.

Let us introduce the value function associated to our problem \eqref{eqn:max_pb}:
\begin{equation}  \label{V} 
V(t,x) = \inf_{\alpha\in\mathcal{A}_t} { \mathbb{E} \bigl[e^{-\eta X^{\alpha, t,x} _T } \bigr]},
 \qquad (t,x)\in [0,T]\times\mathbb{R},
\end{equation}
where $X^{\alpha, t,x} = \{ X^{\alpha, t,x}_s \}_{s\in [t,T]}$  denotes the wealth given in equation \eqref{wealth_alpha}  with initial condition 
$(t,x) \in [0,T] \times \mathbb{R}$,  that is $X^{\alpha, t,x}_t = x$. We notice that
 \begin{equation}  \label{VT} 
 V(T,x) = e^{-\eta x} \qquad \forall x \in\mathbb{R},
\end{equation}
because $X^{\alpha,T,x} _T =x$ $\forall \alpha\in\mathcal{A}$.
 
%--------------------------------------------------------------------------------
%	OPTIMAL REINSURANCE
%--------------------------------------------------------------------------------
 
\section{The pure reinsurance problem}
\label{sec:reins}

In order to have a self-contained article, in this section we briefly investigate a pure reinsurance problem, which corresponds to the problem \eqref{eqn:max_pb} with fixed starting time $t=0$. Precisely, we deal with
\[
\inf_{u\in\mathcal{U}}\mathbb{E}\bigl[e^{-\eta X^u_T}\bigr] ,
\]
where $\mathcal{U}$ denotes the class of admissible strategies  $u=  \{u_t\}_{t\in [0,T]}$, which are all the $\mathbb{F}$-predictable processes with values in $[0,1]$. %, such that
%\[
%\mathbb{E}\bigl[e^{-\eta X^u_T}\bigr]< +\infty.
%\]
Let us denote by $\bar V(t,x)$ the value function associated to this problem, that is
\begin{equation}\label{V bar}
\bar V(t,x) = \inf_{u\in\mathcal{U}_t}{ \mathbb{E}\bigl[e^{-\eta X^{u, t,x} _T} \bigr]},  \quad  (t,x) \in [0,T] \times \mathbb{R}, 
\end{equation}
with $\mathcal{U}_t$ denoting the restriction of $\mathcal{U}$ to the time interval $[t,T]$ and $\{X^{u, t,x}_s\}_{s \in [t,T]}$ denotes the process satisying equation \eqref{W2} with initial data $(t,x) \in [0,T] \times \mathbb{R}$.
It is well known that the value function \eqref{V bar} can be characterized as a classical solution to the associated Hamilton-Jacobi-Bellman (HJB) equation:
\begin{equation}
\begin{cases}
\min_{u\in[0,1]}\mathcal{L}^u \bar V(t,x) = 0, &\forall (t,x)\in[0,T)\times\mathbb{R} \\
\bar V(T,x) = e^{-\eta x } & \forall x\in\mathbb{R} ,
\end{cases}
\end{equation}
where, using equations \eqref{eqn:R^u} and \eqref{W2}, the generator of the Markov process $X^u$ is given by 
\[
\mathcal{L}^u f(t,x)  = \frac{\partial{f}}{\partial{t}}(t,x) 
+ (Rx + p - q + qu) \frac{\partial{f}}{\partial{x}}(t,x) 
+ \frac{1}{2}\sigma_0^2u^2 \frac{\partial^2{f}}{\partial{x^2}}(t,x) ,
\quad f\in\mathcal{C}^{1,2}((0,T) \times \mathbb{R}).  
\]
Under the ansatz $\bar V(t,x) = e^{-\eta x e^{R(T-t)} } \phi(t)$, the HJB equation reads as
\[
%\eta x e^{R(T-t)} R \phi(t) + 
\phi'(t) + \Psi(t) \phi(t) = 0, \qquad \phi(T)=1 ,
\]
where
\[
\Psi(t) = \min_{u\in[0,1]}{\{ - \eta e^{R(T-t)} (p - q + qu)
+ \frac{1}{2}\sigma_0^2 u^2 \eta^2 e^{2R(T-t)} \}} .
\]
Solving the minimization problem we find the unique minimizer:
%\[
%u^*(t) = \frac{\mu\theta}{\mu_2\eta}e^{-R(T-t)} \lor 1 .
%\]
\[
u^*(t) = \frac{q}{\eta\sigma_0^2}e^{-R(T-t)} \lor 1, \qquad t \in [0,T] .
\]

Under the additional condition 
\begin{equation}
\label{eqn:qsmall}
q<\eta\sigma_0^2, % \;\Leftrightarrow\;\theta < \frac{\mu_2\eta}{\mu}
\end{equation}
$u^*(t)$ simplifies to
\begin{equation} \label{u_star} 
u^*(t) = \frac{q}{\eta\sigma_0^2}e^{-R(T-t)} \in (0,1), \qquad t \in [0,T].
%u^*(t) = \frac{\mu\theta}{\mu_2\eta}e^{-R(T-t)} \in (0,1).
\end{equation}
Using this expression we readily obtain that
\begin{equation}
\label{eqn:Psi}
\Psi(t) = \eta e^{R(T-t)}(q-p) - \frac{1}{2}\frac{q^2}{\sigma_0^2} .
\end{equation}
%\[
%\Psi(t) = \eta e^{R(T-t)} \bigl( q-p - \frac{1}{2}\frac{q^2}{\eta\sigma_0^2}e^{-R(T-t)} \bigr) .
%\]
By classical verification arguments, we can verify that the value function given in \eqref{V bar} takes this form:
\begin{equation}\label{ex}
	\begin{split}
\bar V(t,x) &= e^{-\eta x e^{R(T-t)} } e^{\int_t^T \Psi(s) ds} \\
&= e^{-\eta x e^{R(T-t)} } e^{\frac{\eta(q-p)}{R} (e^{R(T-t)}-1) } 
e^{- \frac{1}{2}\frac{q^2}{\sigma_0^2}(T-t)},
	\end{split}
\end{equation}
and, under the condition \eqref{eqn:qsmall},  equation \eqref{u_star} provides an optimal reinsurance strategy.

\begin{Remark}\label{idea}
Comparing the optimal strategy $u^*(s)$, $s \in [t,T]$, to the null reinsurance $u(s)=1$, $s \in [t,T]$, by means of \eqref{V bar} we get that
\begin{equation}
\label{eqn:V<g}
 \bar V(t, x) \leq  \mathbb{E}\bigl[ e^{-\eta X^{t,x}_T}  \bigr] \qquad \forall (t,x) \in [0,T] \times \mathbb{R}.
\end{equation} 
%\begin{equation}
% \bar V(t, x) = \mathbb{E}\bigl[ e^{-\eta X^{u^*, t,x}_T}  \bigr] \leq  \mathbb{E}\bigl[ e^{-\eta X^{t,x}_T}  \bigr], \forall (t,x) \in [0,T] \times \mathbb{R}.
%\end{equation} 
Moreover, by equation \eqref{W1} we have that 
\begin{equation}
\label{g2}
\begin{split}
g(t,x) \doteq \mathbb{E}\bigl[ e^{-\eta X^{t,x}_T}  \bigr] 
&= \mathbb{E}\bigl[  e^{-\eta x e^{R (T-t)}}
e^{- \eta \int_t^T e^{R(T-s)} p ds } e^{- \eta \int_t^T  e^{R(T-s)} \sigma_0 dW_s} \bigr] \\
&= e^{-\eta x e^{R (T-t)}}  e^{-\frac{\eta}{R} (e^{R(T-t)} -1)p}
e^{\frac{1}{4 R} \eta^2\sigma^2_0 (e^{2R(T-t)} -1)}.
\end{split} 
\end{equation}
Defining
\begin{equation}
\label{eqn:h}
h(t) \doteq \eta e^{R(T-t)} \bigl( \frac{1}{2} \eta e^{R(T-t)} \sigma_0^2 - p \bigr),
\end{equation}
we can write
\begin{equation}
\label{g}
g(t,x) =   e^{-\eta x e^{R (T-t)}} e^{\int_t^T h(s) ds}.
\end{equation}
Hence, using \eqref{eqn:Psi} and \eqref{ex}, inequality \eqref{eqn:V<g} reads as
%$$e^{\int_t^T \Psi(s) ds} \leq e^{-\frac{\eta}{R} (e^{R(T-t)} -1)p} 
%e^{\frac{1}{4 R} \eta^2\sigma^2_0 (e^{2R(T-t)} -1)} \qquad \forall t \in [0,T],$$
%or, equivalently, recalling \eqref{eqn:Psi}
\begin{multline}
\int_t^T [\Psi(s) - h(s)]\,ds \\
= \frac{\eta q}{R} (e^{R(T-t)}-1) -  \frac{1}{2}\frac{q^2}{\sigma_0^2}(T-t) - \frac{1}{4 R} \eta^2\sigma^2_0 (e^{2R(T-t)} -1)\leq 0 \qquad  \forall t \in [0,T]. 
\end{multline}
%recalling \eqref{eqn:Psi} it turns out that
%\begin{equation}
%\label{eqn:ineqPsih}
%\begin{split}
%\int_t^T [\Psi(s) - h(s)]\,ds 
%&= \int_t^T \biggl(\eta e^{R(T-s)} q - \frac{1}{2}\frac{q^2}{\sigma_0^2} - \frac{1}{2}\eta^2 e^{2R(T-s)}\sigma_0^2\biggr) ds \\&= \eta q \frac{e^{R(T-t)}-1}{R} - \frac{1}{2}\eta^2\sigma_0^2\frac{e^{2R(T-t)}-1}{2R} - \frac{1}{2}\frac{q^2}{\sigma_0^2}(T-t) \le 0 ,\end{split}\end{equation}where the equality holds for $t=T$.
\end{Remark}

%--------------------------------------------------------------------------------
%	REDUCTION TO OPTIMAL STOPPING
%--------------------------------------------------------------------------------

 \section{Reduction to an optimal stopping problem}
 \label{sec:red}
 
We can show that the mixed stochastic control problem \eqref{V} can be reduced to an optimal stopping problem. Let us denote by $\mathcal{T}_{t,T}$ is the set of $\mathbb{F}$-stopping times $\tau$ such that $t \leq \tau \leq T$.

\begin{Theorem} \label{reduction}
%Assuming the existence of an optimal strategy $u^* = \{ u^*_t\}_{t\in[0,T]} \in \mathcal{U}$  for the problem  \eqref{V bar}, 
We have that
\begin{equation}\label{stop}
V(t,x) =\inf_{\tau\in\mathcal{T}_{t,T}}{\mathbb{E}\bigl[ \bar V(\tau, X^{t,x}_\tau - K ) I_{\{\tau < T\}} + e^{-\eta X^{t,x}_\tau} I_{\{\tau = T\}} \bigr]}
\quad  \forall  t \in[0,T] \times \mathbb{R},
\end{equation}
where $\bar{V}$ is given in \eqref{V bar} and $X^{t,x} = \{X^{t,x}_s \}_{s\in [t,T]}$ denotes the wealth process given in equation \eqref{W1}, with initial data $(t,x) \in [0,T] \times \mathbb{R}$.

Moreover, let $\tau^*_{t,x} \in \mathcal{T}_{t,T}$  an optimal stopping time for problem \eqref{stop}. Then 
$\alpha^* = ( \tau^*_{t,x}, \{u^*_s\}_{s \in [\tau^*_{t,x},T]} )$, with  $u^* = \{ u^*_t\}_{t\in[0,T]}$ given in  \eqref{u_star},  is an optimal strategy for problem \eqref{V},
with the convention that on the event $\{ \tau^*_{t,x} =T\}$ we take $\alpha^* = ( T, 1)$.

\end{Theorem}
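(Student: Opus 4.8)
The plan is to establish the identity \eqref{stop} by proving the two inequalities ``$\geq$'' and ``$\leq$'' separately, and then to read off the optimal strategy from the construction used for ``$\leq$''. The guiding observation is that, by the explicit dynamics \eqref{wealth_alpha}, any admissible $\alpha=(\tau,\{u_s\}_{s\in[\tau,T]})$ makes the controlled wealth coincide with the uncontrolled process $X^{t,x}$ of \eqref{W1} on $[t,\tau)$ and, on $\{\tau<T\}$, with a solution of \eqref{W2} restarted at the random time $\tau$ from the post-jump value $X^{t,x}_\tau-K$; on $\{\tau=T\}$ it is simply $X^{t,x}_T$. Hence
\[
\mathbb{E}\bigl[e^{-\eta X^{\alpha,t,x}_T}\bigr]
= \mathbb{E}\bigl[e^{-\eta X^{t,x}_\tau}I_{\{\tau=T\}}
+ I_{\{\tau<T\}}\,\mathbb{E}\bigl[e^{-\eta X^{\alpha,t,x}_T}\mid\mathcal{F}_\tau\bigr]\bigr],
\]
and everything reduces to comparing the inner conditional expectation with $\bar V(\tau,X^{t,x}_\tau-K)$.

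For ``$\geq$'', fix $\alpha\in\mathcal{A}_t$ and work on $\{\tau<T\}$. Since $\bar V\in\mathcal{C}^{1,2}$ solves the HJB equation of Section \ref{sec:reins}, we have $\mathcal{L}^{u}\bar V\geq 0$ for every $u\in[0,1]$. Applying It\^o's formula to $s\mapsto \bar V(s,X^{\alpha,t,x}_s)$ on the stochastic interval $[\tau,T]$ (after a standard localization by stopping times $\tau\leq\rho_n\uparrow T$ to control the stochastic integral), using $\mathcal{L}^{u_s}\bar V(s,X^{\alpha,t,x}_s)\geq 0$ together with the terminal condition $\bar V(T,\cdot)=e^{-\eta\,\cdot}$, and then letting $n\to\infty$ with the help of the exponential-moment estimates already obtained in the Proposition above, we get
\[
\mathbb{E}\bigl[e^{-\eta X^{\alpha,t,x}_T}\mid\mathcal{F}_\tau\bigr]\,I_{\{\tau<T\}}
\ \geq\ \bar V\bigl(\tau,X^{t,x}_\tau-K\bigr)\,I_{\{\tau<T\}} .
\]
Plugging this into the decomposition above and taking the infimum over $\alpha\in\mathcal{A}_t$ yields $V(t,x)\geq\inf_{\tau\in\mathcal{T}_{t,T}}\mathbb{E}[\,\cdot\,]$, the right-hand side of \eqref{stop}.

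For ``$\leq$'', fix $\tau\in\mathcal{T}_{t,T}$ and consider the specific admissible strategy $\alpha=(\tau,\{u^*_s\}_{s\in[\tau,T]})$ with $u^*$ the deterministic control \eqref{u_star}. Repeating the It\^o computation with $u_s=u^*(s)$ now gives \emph{equality}, because $u^*(s)$ is precisely the pointwise minimizer of $u\mapsto\mathcal{L}^{u}\bar V(s,\cdot)$, so that $\mathcal{L}^{u^*(s)}\bar V(s,\cdot)=0$; hence
\[
\mathbb{E}\bigl[e^{-\eta X^{\alpha,t,x}_T}\bigr]
= \mathbb{E}\bigl[\bar V(\tau,X^{t,x}_\tau-K)\,I_{\{\tau<T\}}+e^{-\eta X^{t,x}_\tau}\,I_{\{\tau=T\}}\bigr].
\]
Since the left-hand side is $\geq V(t,x)$ by definition of the value function, taking the infimum over $\tau\in\mathcal{T}_{t,T}$ proves the opposite inequality. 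Combining the two bounds gives \eqref{stop}, and the last displayed equality, read with $\tau=\tau^*_{t,x}$, shows that $\alpha^*=(\tau^*_{t,x},\{u^*_s\}_{s\in[\tau^*_{t,x},T]})$ attains $V(t,x)$ and is therefore optimal; the convention $\alpha^*=(T,1)$ on $\{\tau^*_{t,x}=T\}$ is consistent since there $X^{\alpha^*,t,x}_T=X^{t,x}_T$ by \eqref{W0}.

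The step I expect to be the main obstacle is the rigorous passage through the conditional It\^o argument at the \emph{random} time $\tau$: one must justify that the local martingale part has zero $\mathcal{F}_\tau$-conditional expectation and that the limit $n\to\infty$ in the localization can be taken, which requires controlling $\mathbb{E}[\sup_{s\in[\tau,T]}\bar V(s,X^{\alpha,t,x}_s)^{p}]$-type quantities uniformly over admissible controls — essentially the same exponential-moment bounds used in the Proposition, combined with the affine structure of \eqref{W1}--\eqref{W2}. A secondary, routine point is checking that $(\tau,\{u^*_s\}_{s\in[\tau,T]})\in\mathcal{A}_t$ for every $\tau\in\mathcal{T}_{t,T}$, which is immediate because $u^*$ is deterministic and continuous.
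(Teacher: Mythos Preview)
Your proposal is correct and follows the same two-inequality skeleton as the paper: one inequality from an arbitrary admissible $\alpha$, the other from the specific choice $\alpha=(\tau,u^*)$, followed by reading off the optimal strategy at $\tau=\tau^*_{t,x}$.

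The one methodological difference worth noting is how you justify the key conditional inequality
\[
\mathbb{E}\bigl[e^{-\eta X^{\alpha,t,x}_T}\mid\mathcal{F}_\tau\bigr]\,I_{\{\tau<T\}}
\ \geq\ \bar V\bigl(\tau,X^{t,x}_\tau-K\bigr)\,I_{\{\tau<T\}}.
\]
You derive it (and the corresponding equality for $u^*$) by applying It\^o's formula to $\bar V$ on $[\tau,T]$ and using the HJB inequality $\mathcal{L}^u\bar V\geq 0$, which forces you into the localization and uniform-integrability argument you flag as the main obstacle. The paper bypasses all of this: since $\bar V$ is \emph{defined} as an infimum over controls in \eqref{V bar}, the inequality $\mathbb{E}[e^{-\eta X^{u,\tau,X^{t,x}_\tau-K}_T}\mid\mathcal{F}_\tau]\geq \bar V(\tau,X^{t,x}_\tau-K)$ is immediate from the definition (plus the Markov structure of \eqref{W2}), and the equality for $u=u^*$ is just the statement that $u^*$ is optimal for \eqref{V bar}, already established in Section~\ref{sec:reins}. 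So the paper's route is shorter and avoids the technical step you identify as delicate; your route has the mild advantage of being self-contained and not relying on a strong-Markov/measurability argument at the random time $\tau$, at the cost of redoing the verification for the inner problem.
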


\begin{proof}
We first prove the inequality 
$$V(t,x) \geq \inf_{\tau\in\mathcal{T}_{t,T}}{\mathbb{E}\bigl[ \bar V(\tau, X^{t,x}_\tau - K ) I_{\{\tau < T\}} +  e^{-\eta X^{t,x}_\tau} I_{\{\tau = T\}} \bigr]} . $$
For any arbitrary strategy $ \alpha = ( \tau, \{u_s\}_{s \in [\tau,T]} ) \in \mathcal{A}_t$, we have that  $\tau \in \mathcal{T}_{t,T}$  and
 by  \eqref{wealth_alpha}
 
 $${\mathbb{E}\bigl[ e^{-\eta X^{\alpha, t,x} _T} \bigr ]} =     
 {\mathbb{E} \bigl[ e^{-\eta X^{u, \tau, X^{t,x}_\tau - K }_T} I_{\{\tau < T\}}  +   e^{-\eta X^{t,x}_\tau}  I_{\{\tau = T\}}  \bigr ]}   =$$
 
 $$ {\mathbb{E} \bigl[ {\mathbb{E}\bigl [ e^{-\eta X^{u, \tau, X^{t,x}_\tau - K }_T}  |  \mathcal{F}_\tau \bigr ]} I_{\{\tau < T\}}  +   e^{-\eta X^{t,x}_\tau}  I_{\{\tau = T\}}  \bigr ]}  \geq $$
  
 $$ {\mathbb{E}\bigl[ \bar V(\tau, X^{t,x}_\tau - K)  I_{\{\tau < T\}}  +  e^{-\eta X^{t,x}_\tau}  I_{\{\tau = T\}} \bigr]} .$$ 
 
Taking the infimum over $\alpha  \in\mathcal{A}_t$  on both side leads to the desired inequality.
The other side of the inequality is based on the fact that there exists $u^* = \{ u^*_t\}_{t\in[0,T]} \in \mathcal{U}$, given in \eqref{u_star} optimal for the problem \eqref{V bar}. Indeed, consider the strategy $\bar \alpha = ( \tau, \{u^*_s\}_{s \in [\tau,T]} ) \in\mathcal{A}_t$ where $\tau$ is  arbitrary 
 chosen in $\mathcal{T}_{t,T}$.  Then  
 $$V(t,x) \leq \mathbb{E}\bigl[ e^{-\eta X^{ \bar \alpha, t,x} _T} \bigr] = \mathbb{E} \bigl[ e^{-\eta X^{u^*, \tau, X^{t,x}_\tau - K }_T} I_{\{\tau < T\}}  +   e^{-\eta X^{t,x}_\tau}  I_{\{\tau = T\}}  \bigr ]  = $$
 $$\mathbb{E}\bigl[ \bar V(\tau, X^{t,x}_\tau - K )   I_{\{\tau < T\}}  +   e^{-\eta X^{t,x}_\tau}  I_{\{\tau = T\}} \bigr].$$

Taking the infimum over $ \tau \in \mathcal{T}_{t,T}$ on the right-hand side gives that 
$$V(t,x) \leq \inf_{\tau\in\mathcal{T}_{t,T}}{\mathbb{E}\bigl[ \bar V(\tau, X^{t,x}_\tau - K ) I_{\{\tau < T\}} + e^{-\eta X^{t,x}_\tau} I_{\{\tau = T\}} \bigr]}
\quad  \forall  t \in[0,T] \times \mathbb{R},$$
and hence the equality \eqref{stop}.

Finally, let $\tau^*_{t,x} \in \mathcal{T}_{t,T}$  an optimal stopping time for problem \eqref{stop} and  
$\alpha^* = ( \tau^*_{t,x}, \{u^*_s\}_{s \in [\tau^*_{t,x},T]} )$, with  $u^* = \{ u^*_t\}_{t\in[0,T]}$ given in  \eqref{u_star}, we get that 
\[
\begin{split}
& V(t,x) = \mathbb{E}\bigl[ \bar V(\tau^*_{t,x}, X^{t,x}_{\tau^*_{t,x}} - K ) I_{\{\tau^*_{t,x}< T\}} + e^{-\eta X^{t,x}_{\tau^*_{t,x}}} I_{\{\tau^*_{t,x} = T\}} \bigr] \\
&	= \mathbb{E} \bigl[ e^{-\eta X^{u^*, \tau^*_{t,x}, X^{t,x}_{\tau^*_{t,x}} - K }_T} I_{\{{\tau^*_{t,x}} < T\}} 
	+   e^{-\eta X^{t,x}_{\tau^*_{t,x}}}  I_{\{{\tau^*_{t,x}} = T\}}  \bigr ] \\
&	= \mathbb{E}\bigl[ e^{-\eta X^{ \alpha^*, t,x} _T} \bigr] ,
\end{split}
\]
and this concludes the proof.
\end{proof}

According to Theorem \ref{reduction} we can solve the original problem given in \eqref{V} in two steps: after investigating the pure reinsurance problem \eqref{V bar} (see Section \ref{sec:reins}), we can analyze the optimal stopping problem \eqref{stop}, which is the main goal of the next section.

%--------------------------------------------------------------------------------
%	OPTIMAL STOPPING
%--------------------------------------------------------------------------------

\section{The  optimal stopping problem}
\label{sec:optstop}
 
In this section we discuss the optimal stopping problem \eqref{stop}:
$$
V(t,x) =\inf_{\tau\in\mathcal{T}_{t,T}}{\mathbb{E}\bigl[ \bar V(\tau, X^{t,x}_\tau - K ) I_{\{\tau < T\}} + e^{-\eta X^{t,x}_\tau} I_{\{\tau = T\}} \bigr]}
\qquad  t \in[0,T] \times \mathbb{R} . $$

Let us observe that
$$V(T,x)  =  e^{-\eta x} \qquad \forall x \in \mathbb{R},$$
while choosing $\tau=t < T$ and $\tau=T$ in the right hand side of \eqref{stop}, we get that
\begin{equation} \label{disu}
V(t,x)  \leq \bar V(t,x-K) \qquad \text{and} \quad V(t,x)  \leq \mathbb{E}\bigl[   e^{-\eta X^{t,x}_T} \bigr] =g(t,x), \forall (t,x)\in[0,T]\times \mathbb{R},
\end{equation}
respectively.\\

Now denote by $\mathcal{L}$  the Markov generator of the process $X^{t,x}$:
\begin{equation}
\label{eqn:L_X}
\mathcal{L} f(t,x)  = \frac{\partial{f}}{\partial{t}}(t,x) 
+ \bigl(Rx + p \bigr) \frac{\partial{f}}{\partial{x}}(t,x) 
+ \frac{1}{2} \sigma_0^2 \frac{\partial^2{f}}{\partial{x^2}}(t,x),
\end{equation}
with $f \in \mathcal{C}^{1,2}((0,T)\times\mathbb{R})$.

\begin{Remark}
From the theory of optimal stopping (see, for instance \cite{oksendal:sde}), when the cost function  $G(t,x)$ is continuous and the value function 
$$W(t,x) =\inf_{\tau\in\mathcal{T}_{t,T}}{\mathbb{E}\bigl[ G(\tau, X^{t,x}_\tau)]},
 \qquad  t \in[0,T] \times \mathbb{R} $$
is sufficiently regular, it can be characterized as a solution to the following variational inequality:
\begin{equation}\label{var} \min\{ \mathcal{L} W(t,x), G(t,x) - W(t,x)\}=0, \quad (t,x) \in (0,T) \times \mathbb{R}. \end{equation}
This is a free-boundary problem, whose solution is the function $W(t,x)$ and the so-called continuation region, which is defined as
\begin{equation}
\mathcal{C} = \{ (t,x) \in (0,T) \times \mathbb{R} : W(t,x) < G(t,x) \}.
\end{equation}
Moreover,  it is known that the first exit time of the process $X^{t,x}$ from the region $\mathcal{C}$   
 \[
\tau^*_{t,x} \doteq \inf \{ s \in [t,T] :  (s,X^{t,x}_s)  \notin   \mathcal{C} \}.
\]
provides an optimal stopping time.

In our optimal stopping problem \eqref{stop}, the cost function is
$$\bar V(t, x - K ) I_{\{t < T\}} + e^{-\eta x} I_{\{t = T\}}, \quad t \in[0,T] \times \mathbb{R}, $$
which is not continuous on $[0,T] \times \mathbb{R}$, hence the classical theory on optimal
stopping  problems does not directly apply. 
\end{Remark}
 
In view of the preceding remark, we now prove a Verification Theorem which applies to our specific problem.

%see \eqref{var} below, which applies in the case where $V(t,x)$ is sufficiently regular,  that is $\mathcal{C}^{1,2}((0,T)\times\mathbb{R})\cap\mathcal{C}([0,T]\times\mathbb{R})$.
%We will prove that this is the case under suitable values on the problem parameters, see Theorem \ref{thm:auxstop}.

\begin{Theorem}[Verification Theorem] \label{verification}
Let $\varphi \colon [0,T]\times\mathbb{R}\to\mathbb{R}$ be a function satisfying the assumptions below and
$\mathcal{C}$ (the continuation region) be defined by
\begin{equation}
\label{eqn:C_reg}
\mathcal{C} = \{ (t,x) \in (0,T) \times \mathbb{R} :  \varphi(t,x) < \bar V(t,x-K) \}.
\end{equation}
Suppose that  the following conditions are satisfied.
\begin{enumerate}
\item There exists $t^* \in [0,T)$ such that $\mathcal{C}=(t^*,T)\times\mathbb{R}$. 
%and for $t^* \in (0,T)$ we assume $\varphi(0,x) = \bar V(0,x-K)$,  $\forall x\in\mathbb{R}$,  while in the case $t^*=0$, $\varphi(0,x) < \bar V(0,x-K)$ $\forall x\in \mathbb{R}$;

\item $\varphi \in \mathcal{C}([0,T]\times\mathbb{R})$, $\varphi$ is  $\mathcal{C}^{1}$ w.r.t $t$ in $(0,t^*)$ and $(t^*, T)$, separately, and $\mathcal{C}^{2}$ w.r.t. $x \in\mathbb{R}$;

\item $\varphi(t,x) \leq \bar V(t,x-K)$ $\forall  (t,x)\in [0,T] \times \mathbb{R}$ and
$\varphi(T,x) = e^{-\eta x}$ $\forall x\in\mathbb{R}$;

\item $\varphi$ is a solution to the following variational inequality 
\begin{equation}\label{var1}
\begin{cases}
\mathcal{L} \varphi(t,x) \geq 0 	&\forall (t,x) \in (0,t^*) \times \mathbb{R} \\
 \mathcal{L} \varphi(t,x)=0 		&\forall (t,x) \in \mathcal{C} = (t^*, T) \times \mathbb{R} .
\end{cases}
 \end{equation}
 
% \begin{equation}\label{var} \min\{ \mathcal{L} \varphi(t,x), \bar V(t,x-K) - \varphi(t,x))\}=0, \quad (t,x) \in (0,T) \times \mathbb{R}; \end{equation}

\item the family $\{ \varphi(\tau, X_\tau); \tau \in  \mathcal{T}_{0,T} \}$  is uniformly integrable.
%$\mathbb{E}[\varphi(X_\tau)^{1+\delta}]<+\infty$ $\forall \delta>0$ for any stopping time $\tau\leq T$.
\end{enumerate}
Moreover,  let $\tau^*_{t,x}$ the first exit time from the region $\mathcal{C}$ of the process $X^{t,x}$, that is  
 \[
\tau^*_{t,x} \doteq \inf \{ s \in [t,T] :  (s,X^{t,x}_s)  \notin   \mathcal{C} \}.
\]
with the convention  $\tau^*_{t,x} =T$ if the set on the right-hand side is empty.\\
Then $\varphi(t,x) = V(t,x)$ on $[0,T] \times \mathbb{R}$ and $ \tau^*_{t,x} $  is an optimal stopping time for problem \eqref{stop}.
\end{Theorem}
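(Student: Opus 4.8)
The plan is the usual two-sided verification argument for optimal stopping, modified to cope with the fact that the stopping reward in \eqref{stop} jumps at the terminal time: I would prove $\varphi\le V$ against an arbitrary competitor $\tau\in\mathcal{T}_{t,T}$, and the reverse inequality $V\le\varphi$ by exhibiting $\tau^*_{t,x}$ as an optimizer. Two preliminary observations make the bookkeeping transparent. First, on the stopping region $\{t\le t^*\}$ one has, by condition 1, $(t,x)\notin\mathcal{C}$, hence by \eqref{eqn:C_reg} the strict inequality $\varphi(t,x)<\bar V(t,x-K)$ fails there; combined with condition 3 this gives $\varphi(t,x)=\bar V(t,x-K)$ for all $(t,x)\in[0,t^*]\times\mathbb{R}$. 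Second, since $\mathcal{C}=(t^*,T)\times\mathbb{R}$ imposes no restriction on the space variable, the exit time $\tau^*_{t,x}$ is \emph{deterministic}: $\tau^*_{t,x}=t$ if $t\le t^*$ and $\tau^*_{t,x}=T$ if $t>t^*$. Finally, by condition 4 we have $\mathcal{L}\varphi\ge 0$ at Lebesgue-a.e. point of $(0,T)\times\mathbb{R}$, the level $\{t=t^*\}$ being negligible in all the time integrals below.

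For the lower bound, fix $(t,x)$ and $\tau\in\mathcal{T}_{t,T}$, and set $\tau_n=\tau\wedge\inf\{s\ge t:\lvert X^{t,x}_s\rvert\ge n\}$. By condition 2 the function $\varphi$ is continuous on $[0,T]\times\mathbb{R}$, of class $\mathcal{C}^{1,2}$ off the hyperplane $\{t=t^*\}$, and the time variable carries no martingale part, so Itô's formula applies to $s\mapsto\varphi(s,X^{t,x}_s)$ on each side of $t=t^*$ with the a.e.\ time derivative, the two pieces joining continuously; along $[t,\tau_n]$ the integrand $\partial_x\varphi(s,X^{t,x}_s)\sigma_0$ of the stochastic integral is bounded, so that integral is a true martingale and, taking expectations, $\mathbb{E}\bigl[\varphi(\tau_n,X^{t,x}_{\tau_n})\bigr]=\varphi(t,x)+\mathbb{E}\bigl[\int_t^{\tau_n}\mathcal{L}\varphi(s,X^{t,x}_s)\,ds\bigr]\ge\varphi(t,x)$ because $\mathcal{L}\varphi\ge 0$. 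Letting $n\to\infty$, $\tau_n\uparrow\tau$ and, since the linear SDE \eqref{W1} does not explode, $\varphi(\tau_n,X^{t,x}_{\tau_n})\to\varphi(\tau,X^{t,x}_\tau)$ a.s.; the uniform integrability in condition 5 then yields $\varphi(t,x)\le\mathbb{E}\bigl[\varphi(\tau,X^{t,x}_\tau)\bigr]$. Splitting according to $\{\tau<T\}$ and $\{\tau=T\}$ and using condition 3 ($\varphi(t,x)\le\bar V(t,x-K)$ and $\varphi(T,x)=e^{-\eta x}$) gives $\varphi(\tau,X^{t,x}_\tau)\le\bar V(\tau,X^{t,x}_\tau-K)I_{\{\tau<T\}}+e^{-\eta X^{t,x}_T}I_{\{\tau=T\}}$; taking expectations and then the infimum over $\tau\in\mathcal{T}_{t,T}$ we obtain $\varphi(t,x)\le V(t,x)$.

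For the reverse inequality, I would evaluate the right-hand side of \eqref{stop} at $\tau=\tau^*_{t,x}$. If $t\le t^*$ then $\tau^*_{t,x}=t<T$ (condition 1 gives $t^*<T$) and $X^{t,x}_t=x$, so the reward equals $\bar V(t,x-K)$, which equals $\varphi(t,x)$ by the stopping-region identity above. If $t>t^*$ (the case $t=T$ being immediate from condition 3), then $\tau^*_{t,x}=T$ and $(s,X^{t,x}_s)\in\mathcal{C}$ for every $s\in[t,T)$, so $\mathcal{L}\varphi(s,X^{t,x}_s)=0$ there; repeating the localized Itô computation, now with the Lebesgue integral identically zero, and passing to the limit via condition 5, gives $\varphi(t,x)=\mathbb{E}\bigl[\varphi(T,X^{t,x}_T)\bigr]=\mathbb{E}\bigl[e^{-\eta X^{t,x}_T}\bigr]$, which is exactly the reward at $\tau^*_{t,x}=T$. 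In all cases $V(t,x)\le\mathbb{E}[\,\text{reward at }\tau^*_{t,x}\,]=\varphi(t,x)$, so together with the lower bound $\varphi\equiv V$ on $[0,T]\times\mathbb{R}$, and $\tau^*_{t,x}$ attains the infimum in \eqref{stop}, i.e.\ it is optimal.

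The only genuinely delicate point — the place where the discontinuity of the reward interferes — is the application of Itô's/Dynkin's formula: one must push the formula through the time level $t=t^*$ at which $\partial_t\varphi$ is merely one-sided (handled by continuity of $\varphi$ there and the deterministic, finite-variation nature of the time coordinate), localize to turn the stochastic integral into a genuine martingale, and then pass to the limit using precisely the uniform-integrability hypothesis in condition 5, which here plays the role that continuity of the reward plays in the classical verification theorem. Everything else is bookkeeping on the two regions $\{t\le t^*\}$ and $\{t>t^*\}$.
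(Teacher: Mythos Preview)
Your proposal is correct and follows essentially the same two-sided verification scheme as the paper: a localized Dynkin/It\^o argument plus $\mathcal{L}\varphi\ge 0$ for the bound $\varphi\le V$, and evaluation at the deterministic exit time $\tau^*_{t,x}$ for the reverse. The only cosmetic differences are that the paper handles the lack of $\mathcal{C}^1$-regularity in $t$ at $t^*$ by explicitly splitting the Dynkin formula at the intermediate time $\tau\wedge\tau_n\wedge t^*$ (rather than invoking a generalized It\^o with a.e.\ time derivative as you do), and that the paper passes to the limit $n\to\infty$ via Fatou's lemma rather than the uniform-integrability condition~5 you use; your route is arguably cleaner since it uses exactly the stated hypothesis.
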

\begin{proof}

%We adapt to our setting the proof of \cite[Theorem 10.4.1]{oksendal:sde}. 
%First let us observe that \eqref{var} reads as 
% \begin{equation}\label{var1}
% \mathcal{L} \varphi(t,x)=0, \quad (t,x) \in \mathcal{C}, \quad \mathcal{L} \varphi(t,x) \geq 0, \quad \forall (t,x) \in (0,T) \times \mathbb{R}  - \mathcal{C}.
% \end{equation}

For any $(t,x)\in [0,T) \times \mathbb{R}$ let us take the sequence of stopping times $\{\tau_n\}_{n\ge1}$ such that $\tau_n\doteq\inf\{s\ge t\mid |X^{t,x}_s| \ge n\}$. 
We first prove that, $\forall \tau \in \mathcal{T}_{t,T}$

\begin{equation} \label{First}
\varphi(t,x) \leq \mathbb{E}[\varphi(\tau\land \tau_n,X^{t,x}_{\tau\land \tau_n})], \quad \forall (t,x)\in [0,T) \times \mathbb{R}.
\end{equation}

Due to the specific form of the continuation region we have two cases.
If $t \geq t^*$, since $\varphi \in \mathcal{C}^{1,2}((t^*,T)\times\mathbb{R})$, %\cap\mathcal{C}([0,T]\times\mathbb{R})$, 
applying Dynkin's formula%
\footnote{Notice that we use a localization argument, so that $\tau\land \tau_n$ is the first exit time of a bounded set and, as s consequence, $\varphi$ is not required to have a compact support (see \cite[Theorem 7.4.1]{oksendal:sde}).}
 we get that for any arbitrary stopping time $\tau \in \mathcal{T}_{t,T}$ %<+\infty$
\[
\varphi(t,x) = \mathbb{E}[\varphi(\tau\land \tau_n,X^{t,x}_{\tau\land \tau_n})] 
- \mathbb{E}\biggl[ \int_t^{\tau\land \tau_n} \mathcal{L} \varphi(s,X^{t,x}_s)\,ds \biggr] =  \mathbb{E}[\varphi(\tau\land \tau_n,X^{t,x}_{\tau\land \tau_n})] .
\]

If  $t  < t^*$, we have again by Dynkin's formula, since  $\varphi \in \mathcal{C}^{1,2}((0,t^*)\times\mathbb{R})$, that
\[
\begin{split}
 \varphi(t,x) &= \mathbb{E}[\varphi(\tau\land \tau_n\land t^*,X^{t,x}_{\tau\land \tau_n\land t^*})] 
- \mathbb{E}\biggl[ \int_t^{\tau\land \tau_n \land t^*} \mathcal{L} \varphi(s,X^{t,x}_s)\,ds \biggr] \\
& \leq \mathbb{E}[\varphi(\tau\land \tau_n\land t^*,X^{t,x}_{\tau\land \tau_n\land t^*})] 
\end{split}
\]
and, similarly, since $\varphi \in \mathcal{C}^{1,2}((t^*,T)\times\mathbb{R})$,
 \[
 \begin{split}
\mathbb{E}[\varphi(\tau\land \tau_n\land t^*,X^{t,x}_{\tau\land \tau_n\land t^*})]  
 & = \mathbb{E}[\varphi(\tau\land \tau_n,X^{t,x}_{\tau\land \tau_n})] 
- \mathbb{E}\biggl[ \int_{\tau\land \tau_n \land t^*}^{\tau\land \tau_n} \mathcal{L} \varphi(s,X^{t,x}_s)\,ds \biggr]
\\
& = \mathbb{E}[\varphi(\tau\land \tau_n,X^{t,x}_{\tau\land \tau_n})],  
\end{split}
 \]
hence \eqref{First} is proved.

Now letting $n\to+\infty$ in \eqref{First}, recalling that $\varphi  \in \mathcal{C}([0,T]\times\mathbb{R})$ and using  Fatou Lemma  %, hypothesis $5.$, and \eqref{var1}
 we get that
\[
\begin{split}
\varphi(t,x) \leq \mathbb{E}[\varphi(\tau,X^{t,x}_\tau)] \leq \mathbb{E}[\bar V(\tau,X^{t,x}_\tau-K)I_{\{\tau < T\}} + e^{-\eta X^{t,x}_\tau} I_{\{\tau = T\}}]  \quad \forall \tau \in \mathcal{T}_{t,T},
\end{split}
\]
 hence $\varphi(t,x) \leq V(t,x)$, $\forall (t,x)\in [0,T) \times \mathbb{R}$. 
To prove the opposite inequality we  consider four different cases.
\begin{enumerate}
\item  If the stopping region is not empty, that is $t^* \in (0,T)$,  $\forall (t,x)\in (0, t^*) \times \mathbb{R}$ we know that $\varphi(t,x) = \bar{V}(t,x-K) \geq V(t,x)$,  hence $\varphi(t,x) = V(t,x)$, which implies $V(t,x) = \bar{V}(t,x-K)$ and $\tau^*_{t,x}=t$ is optimal for problem \eqref{stop}. 
\item If the stopping region is not empty,  for $t=0$, we have that  $\varphi(0,x) = \bar V(0,x-K)$ $\forall x \in \mathbb{R}$, 
otherwise by continuity of both the functions if $\varphi(0,x) > \bar V(0,x-K)$ (or $\varphi(0,x) < \bar V(0,x-K)$) the same inequality holds in a neighborhood of $(0,x)$ which contradicts  that 
$\varphi(t,x) = \bar V(t,x-K)$, $\forall (t,x) \in (0,t^*)\times \mathbb{R}$.  Then $\varphi(0,x) = V(0,x)$ $\forall x \in \mathbb{R}$ and $\tau^*_{0,x}=0$ is optimal for problem \eqref{stop}. 
\item If the continuation region is not empty, that is $t^* \in [0,T)$, $\forall (t,x)\in  [t^*,T) \times \mathbb{R}$, repeating the localization argument with the stopping time $\tau^*_{t,x} =T$, we get 
\[
\begin{split}
\varphi(t,x) &= \mathbb{E}[\varphi(T,X^{t,x}_T)] 
- \mathbb{E}\biggl[ \int_t^T \mathcal{L} \varphi(s,X^{t,x}_s)\,ds \biggr] \\
&= \mathbb{E}[\varphi (T,X^{t,x}_T)] = \mathbb{E}[ e^{-\eta X^{t,x}_{T} }] \ge V(t,x),
\end{split}
\]
as a consequence $\varphi(t,x) = V(t,x)=\mathbb{E}[ e^{-\eta X^{t,x}_{T} }]$ and  $\tau^*_{t,x} = T$  is optimal for problem \eqref{stop}.
%\item If $\mathcal{C}= (0,T) \times \mathbb{R}$, that is $t^*=0$, for $t=0$, %we have by assumption $\varphi(0,x) < \bar V(0,x-K)$, 
%again by a localization argument  we have that, $\forall  x \in \mathbb{R}$
%\[
%\begin{split}\varphi(0,x) &= \mathbb{E}[\varphi(T,X^{0,x}_T)] - \mathbb{E}\biggl[ \int_0^{T} \mathcal{L} \varphi(s,X^{0,x}_s)\,ds \biggr] \\&= \mathbb{E}[e ^{-\eta X^{0,x}_T}] \ge V(0,x),
%\end{split}\]as a consequence, $\forall  x \in \mathbb{R}$, $\varphi(0,x) = V(0,x)$ and  $\tau^*_{0,x}=T$  is optimal for problem \eqref{stop}. 
 \item Finally, for $t=T$ by assumption $\varphi(T,x) =  e^{-\eta x} = V(T,x)$, $\forall x \in   \mathbb{R}$,  $\tau^*_{T,x} = T$ is optimal for problem \eqref{stop} 
and this concludes the proof.
\end{enumerate}
%Finally if $t=0$ by assumption $\varphi(0,x) = \bar V(0,x-K) \geq  = V(0,x)$, $\forall x \in   \mathbb{R}$ and this concludes the proof.
%because of \cite[Theorem 10.1.12]{oksendal:sde}.
\end{proof}

\begin{Lemma}\label{unif}
Let $g$ as defined in equation \eqref{g}. The families $\{ \bar V(\tau, X_\tau-K); \tau \in  \mathcal{T}_{0,T} \}$  and  $\{  g(\tau,X_\tau); \tau \in  \mathcal{T}_{0,T} \}$
are uniformly integrable.
\end{Lemma}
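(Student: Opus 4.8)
The plan is to exploit the closed-form expressions \eqref{ex} and \eqref{g}, which turn the statement into a single exponential-moment estimate for the process $X$. Since every stopping time here satisfies $\tau\le T$, each purely deterministic exponential factor occurring in $\bar V(\tau,X_\tau-K)$ and in $g(\tau,X_\tau)$ --- namely $e^{\eta K e^{R(T-\tau)}}$, $e^{\frac{\eta(q-p)}{R}(e^{R(T-\tau)}-1)}$, $e^{-\frac12\frac{q^2}{\sigma_0^2}(T-\tau)}$, $e^{-\frac{\eta}{R}(e^{R(T-\tau)}-1)p}$ and $e^{\frac{1}{4R}\eta^2\sigma_0^2(e^{2R(T-\tau)}-1)}$ --- is continuous, hence bounded, on the compact interval $[0,T]$. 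Therefore there is a constant $C>0$ with
\[
\bar V(\tau,X_\tau-K)\le C\, e^{-\eta X_\tau e^{R(T-\tau)}}, \qquad g(\tau,X_\tau)\le C\, e^{-\eta X_\tau e^{R(T-\tau)}} \qquad \forall\,\tau\in\mathcal{T}_{0,T},
\]
so it suffices to show that $\{ e^{-\eta X_\tau e^{R(T-\tau)}};\,\tau\in\mathcal{T}_{0,T}\}$ is uniformly integrable, and for this I would prove the stronger fact that this family is bounded in $L^2$.

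Using the representation \eqref{X} one has
\[
X_\tau\, e^{R(T-\tau)} = R_0 e^{RT} + \int_0^\tau e^{R(T-s)}p\,ds + \int_0^\tau e^{R(T-s)}\sigma_0\,dW_s ,
\]
and since $p>0$ the drift integral is nonnegative, whence $e^{-\eta X_\tau e^{R(T-\tau)}}\le e^{-\eta R_0 e^{RT}}\,e^{Z_\tau}$ with $Z_t\doteq-\eta\sigma_0\int_0^t e^{R(T-s)}\,dW_s$. The process $Z$ is a continuous martingale with deterministic quadratic variation $\langle Z\rangle_t=\eta^2\sigma_0^2\int_0^t e^{2R(T-s)}\,ds$, bounded by $A\doteq\eta^2\sigma_0^2\int_0^T e^{2R(T-s)}\,ds<\infty$. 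Because $\langle Z\rangle_T\le A$, Novikov's criterion guarantees that the stochastic exponential $\mathcal{E}(2Z)_t=\exp\!\bigl(2Z_t-2\langle Z\rangle_t\bigr)$ is a true, hence uniformly integrable, martingale on $[0,T]$; by optional sampling $\mathbb{E}[\mathcal{E}(2Z)_\tau]=1$ for every $\tau\in\mathcal{T}_{0,T}$. Writing $e^{2Z_\tau}=\mathcal{E}(2Z)_\tau\,e^{2\langle Z\rangle_\tau}\le e^{2A}\,\mathcal{E}(2Z)_\tau$ yields $\sup_{\tau\in\mathcal{T}_{0,T}}\mathbb{E}[e^{2Z_\tau}]\le e^{2A}$, and therefore $\sup_{\tau\in\mathcal{T}_{0,T}}\mathbb{E}\bigl[e^{-2\eta X_\tau e^{R(T-\tau)}}\bigr]<\infty$.

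Combining the two steps gives $\sup_{\tau\in\mathcal{T}_{0,T}}\mathbb{E}[\bar V(\tau,X_\tau-K)^2]<\infty$ and $\sup_{\tau\in\mathcal{T}_{0,T}}\mathbb{E}[g(\tau,X_\tau)^2]<\infty$; boundedness in $L^2$ implies uniform integrability, which is the assertion. The only point requiring genuine care is the uniform exponential-moment bound for the stochastic integral evaluated along arbitrary stopping times, which is exactly what the Novikov/optional-sampling argument above delivers; everything else is elementary bookkeeping with the explicit formulas.
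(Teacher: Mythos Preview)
Your proof is correct and follows essentially the same approach as the paper: both arguments reduce the claim to a uniform $L^p$ bound (you take $p=2$, the paper takes $p=1+\delta$) on $e^{-\eta X_\tau e^{R(T-\tau)}}$ by bounding the deterministic factors on $[0,T]$ and then controlling the exponential moment of the stochastic integral. Your Novikov/optional-sampling justification for the bound $\sup_\tau \mathbb{E}[e^{2Z_\tau}]\le e^{2A}$ makes explicit a step the paper leaves implicit.
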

\begin{proof}
%Let $g(t,x) = \mathbb{E}\bigl[   e^{-\eta X^{t,x}_T} \bigr] = e^{-\eta x e^{R(T- t )} }e ^{ \int_t ^T h(s) ds }  $ and 
Recalling that $\bar V(t,x) \leq g(t,x)$ by \eqref{eqn:V<g}, we have that $\bar V(t, x-k) \leq e ^{\eta K e^{RT}} g(t,x)$, hence the statement follows by the uniformly integrability of the family 
$$\{ g(\tau,X_\tau):  \tau \in  \mathcal{T}_{0,T} \}.$$ 
It is well known that if for any arbitrary $\delta>0$ and any stopping time $\tau \in \mathcal{T}_{0,T}$  
\[
 \mathbb{E}[ g(\tau,X_\tau)^{1+\delta}]  < +\infty,
 \]
then the proof is complete. To this end, we observe that
\[
\begin{split}
\mathbb{E}[ g(\tau,X_\tau)^{1+\delta}] 
	&= \mathbb{E}[ e^{(1 + \delta) \int_\tau^T h(s) ds } e^{- \eta (1 + \delta) e^{R(T- \tau )} X_\tau }] \\
	&\le e^{\frac{1}{4 R} (1 + \delta)  \eta^2\sigma^2_0 e^{2RT} }
	e^{-\eta (1 + \delta) R_0 e^{RT}}\mathbb{E}[e ^{- \eta (1 + \delta) \int_0^\tau e^{R(T- s )} \sigma_0 dW_s}]\\
	&\le e^{\frac{1}{4 R} (1 + \delta)  \eta^2\sigma^2_0 e^{2RT} }
	e^{-\eta (1 + \delta) R_0 e^{RT}}  e^{\frac{1}{4 R} (1 + \delta)^2  \eta^2 \sigma^2_0 (e^{2RT} -1) } <+\infty.
\end{split}
\]
\end{proof}
% qui senza maggiorazione di g
%\begin{proof}
%%Let $g(t,x) = \mathbb{E}\bigl[   e^{-\eta X^{t,x}_T} \bigr] = e^{-\eta x e^{R(T- t )} }e ^{ \int_t ^T h(s) ds }  $ and 
%Recalling that $\bar V(t,x) \leq g(t,x)$ by \eqref{eqn:V<g}, we have that $\bar V(t, x-k) \leq e ^{\eta K e^{R(T-t)}} g(t,x)$, hence the statement follows by the uniformly integrability of the family 
%$$\{ e ^{\eta K e^{R(T- \tau )}} g(t,X_\tau):  \tau \in  \mathcal{T}_{0,T} \}.$$ 
% It is easy to prove that for any arbitrary $\delta>0$ and any stopping time $\tau \in \mathcal{T}_{0,T}$  
%\[
% \mathbb{E}[ \{ e ^{\eta K e^{R(T- \tau )}} g(t,X_\tau) \}^{1+\delta}]  < +\infty.
% \]
% In fact, 
%
%\[
%\{e ^{\eta K e^{R(T- \tau )}} g(t,X_\tau)\}^{1+\delta} \leq e ^{\eta K (1 + \delta) e^{RT}}  g(t,X_\tau)^{1+\delta}, 
%\]
%
%\[
%e ^{(1 + \delta) \int_\tau ^T h(s) ds }  \leq  e^{\frac{1}{4 R} (1 + \delta)  \eta^2\sigma^2_0 e^{2RT} },
%\]
%and
%\[
%\begin{split}
% \mathbb{E}[e ^{- \eta (1 + \delta) e^{R(T- \tau )} X_\tau }]   & \leq e^{-\eta (1 + \delta) R_0 e^{RT}}   \mathbb{E}[e ^{- \eta (1 + \delta) \int_0^\tau e^{R(T- s )} \sigma_0 dW_s} ] \\
%& \leq e^{-\eta (1 + \delta) R_0 e^{RT}}  e^{\frac{1}{4 R} (1 + \delta)^2  \eta^2 \sigma^2_0 (e^{2RT} -1) }  < +\infty.
%\end{split}
%\]
%\end{proof}

The  guess for the continuation region $ \mathcal{C}$ given in the assumption $1.$  of the Verification Theorem follows by the next result.

\begin{Lemma}
\label{lemma:A}
The set
\begin{equation}
\label{eqn:setA}
A \doteq \{(t,x) \in (0,T) \times \mathbb{R} :  \mathcal{L} \bar V(t, x-K) < 0 \} 
\end{equation}
%and $\varphi \colon [0,T]\times\mathbb{R}\to\mathbb{R}$ satisfying $2., 3.$ and $4.$ in Theorem   \ref{verification},
is included in the continuation region, that is 
$$A \subseteq  \mathcal{C} = \{ (t,x) \in (0,T) \times \mathbb{R} :  V(t,x) < \bar V(t,x-K) \}.$$
Moreover,  the following equation holds:
\[
A = (t_A,T)\times\mathbb{R} ,
\]
where
\begin{equation}
t_A \doteq 0\lor 
\biggl[T- \frac{1}{R} \log{\biggl( \frac{q+RK + \sqrt{(q+RK)^2 - q^2}}{\eta\sigma_0^2} \biggr)} \biggr]
\land T .
\end{equation}
In particular, only three cases are possible, depending on the model parameters:
\begin{enumerate}
\item if 
\[
y^*:=\frac{q+RK + \sqrt{(q+RK)^2 - q^2}}{\eta\sigma_0^2} \ge e^{RT},
\]
then $t_A = 0$ and $\mathcal{L} \bar V(t, x-K) < 0$ $\forall (t,x)\in (0,T)\times\mathbb{R}$, so that $A= (0,T) \times\mathbb{R}$, implying that $\mathcal{C}= (0,T) \times\mathbb{R}$;
\item if 
\[
1 <  y^* = \frac{q+RK + \sqrt{(q+RK)^2 - q^2}}{\eta\sigma_0^2} < e^{RT},
\]
then $0 < t_A<T$ and $\mathcal{L} \bar V(t, x-K) < 0$ $\forall (t,x)\in(t_A,T)\times\mathbb{R}$; in this case $A=(t_A,T)\times\mathbb{R}$;
\item if 
\[
y^*= \frac{q+RK + \sqrt{(q+RK)^2 - q^2}}{\eta\sigma_0^2} \le 1,
\]
then $t_A = T$ and $\mathcal{L} \bar V(t, x-K) \ge 0$ $\forall (t,x)\in (0,T)\times\mathbb{R}$, so that $A=\emptyset$.
\end{enumerate}
\end{Lemma}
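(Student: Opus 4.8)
The plan is to prove the two assertions of the lemma separately: first the inclusion $A\subseteq\mathcal{C}$, and then the explicit identification $A=(t_A,T)\times\mathbb{R}$ together with the three-case trichotomy. For the inclusion I would argue locally. Since $\bar V$ is given in closed form by \eqref{ex} it is smooth, so the map $(t,x)\mapsto\mathcal{L}\bar V(t,x-K)$ is continuous and $A$ is open in $(0,T)\times\mathbb{R}$. Fix $(t_0,x_0)\in A$ and choose a bounded open ball $B\ni(t_0,x_0)$ with $\overline B\subset A$ and $\overline B\subset(0,T)\times\mathbb{R}$. Let $\tau\doteq\inf\{s\ge t_0:(s,X^{t_0,x_0}_s)\notin B\}$; since $\overline B$ lies strictly below the level $\{t=T\}$, the exit time $\tau$ is bounded and $\tau<T$ a.s., while $\tau>t_0$ a.s.\ because $(t_0,x_0)$ is interior to $B$. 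Applying Dynkin's formula to the smooth function $(t,x)\mapsto\bar V(t,x-K)$ on $[t_0,\tau]$ (no localization is needed since $B$ is bounded) gives
\[
\mathbb{E}\bigl[\bar V(\tau,X^{t_0,x_0}_\tau-K)\bigr]=\bar V(t_0,x_0-K)+\mathbb{E}\Bigl[\int_{t_0}^{\tau}\mathcal{L}\bar V(s,X^{t_0,x_0}_s-K)\,ds\Bigr],
\]
and since $\mathcal{L}\bar V(\cdot,\cdot-K)\le-\varepsilon<0$ on the compact set $\overline B$ for some $\varepsilon>0$, the last term is at most $-\varepsilon\,\mathbb{E}[\tau-t_0]<0$. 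As $\tau<T$ a.s., the reduction identity \eqref{stop} yields $V(t_0,x_0)\le\mathbb{E}[\bar V(\tau,X^{t_0,x_0}_\tau-K)]<\bar V(t_0,x_0-K)$, i.e.\ $(t_0,x_0)\in\mathcal{C}$.

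For the explicit form of $A$ I would compute $\mathcal{L}\bar V(t,x-K)$ directly from \eqref{ex}. Writing $\bar V(t,x)=e^{-\eta x e^{R(T-t)}}\phi(t)$ with $\phi'(t)/\phi(t)=-\Psi(t)$ and differentiating $\bar V(t,x-K)$, the first-order terms in $x$ cancel and, substituting \eqref{eqn:Psi} for $\Psi$, one obtains
\[
\mathcal{L}\bar V(t,x-K)=\bar V(t,x-K)\,Q\bigl(e^{R(T-t)}\bigr),\qquad Q(m)\doteq\tfrac12\sigma_0^2\eta^2m^2-\eta(q+RK)\,m+\tfrac{q^2}{2\sigma_0^2}.
\]
Since $\bar V>0$, the sign of $\mathcal{L}\bar V(t,x-K)$ equals that of $Q(e^{R(T-t)})$, which already shows that membership in $A$ does not depend on $x$ and hence that $A$ is a vertical strip. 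The quadratic $Q$ has discriminant $\eta^2\bigl[(q+RK)^2-q^2\bigr]>0$ and two positive roots $y_-<y_+$, with $y_+=y^*=\frac{q+RK+\sqrt{(q+RK)^2-q^2}}{\eta\sigma_0^2}$. By Vieta, $y_-y_+=q^2/(\eta\sigma_0^2)^2$; since $y_+>q/(\eta\sigma_0^2)$ and, by the standing assumption \eqref{eqn:qsmall}, $q/(\eta\sigma_0^2)<1$, it follows that $y_-<q/(\eta\sigma_0^2)<1$. For $t\in(0,T)$ one has $e^{R(T-t)}\in(1,e^{RT})$, hence $e^{R(T-t)}>1>y_-$ always, so $Q(e^{R(T-t)})<0\iff e^{R(T-t)}<y^*\iff t>T-\tfrac1R\log y^*$. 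Intersecting with $t\in(0,T)$ gives $A=(t_A,T)\times\mathbb{R}$ with $t_A=0\lor[T-\tfrac1R\log y^*]\land T$, and the three stated cases follow at once: if $y^*\ge e^{RT}$ then $\log y^*\ge RT$, so $t_A=0$ and $\mathcal{L}\bar V(t,x-K)<0$ on all of $(0,T)\times\mathbb{R}$; if $1<y^*<e^{RT}$ then $0<t_A<T$; and if $y^*\le1$ then $t_A=T$ and, as $e^{R(T-t)}>1\ge y^*>y_-$, one has $Q(e^{R(T-t)})>0$ throughout, i.e.\ $A=\emptyset$. These three cases are exhaustive and mutually exclusive since $1<e^{RT}$.

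The bookkeeping is routine; the one genuinely delicate point is the use of \eqref{eqn:qsmall} to discard the smaller root $y_-$, which is exactly what collapses the ``between-the-roots'' condition $y_-<e^{R(T-t)}<y^*$ into the single inequality $e^{R(T-t)}<y^*$ and thereby produces the clean threshold $t_A$ and the trichotomy. The inclusion $A\subseteq\mathcal{C}$ carries no real obstacle once one observes that choosing the exit time $\tau$ from a ball bounded away from $\{t=T\}$ sidesteps the discontinuity of the stopping reward in \eqref{stop} and keeps the argument entirely on the smooth branch $\bar V(\cdot,\cdot-K)$.
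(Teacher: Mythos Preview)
Your proof is correct and follows essentially the same route as the paper: the inclusion $A\subseteq\mathcal{C}$ is obtained via Dynkin's formula applied to $\bar V(\cdot,\cdot-K)$ up to the first exit from a small neighbourhood contained in $A$ and bounded away from $\{t=T\}$, and the identification of $A$ proceeds by computing $\mathcal{L}\bar V(t,x-K)$, reducing to a quadratic in $e^{R(T-t)}$, and invoking \eqref{eqn:qsmall} to discard the smaller root. Your presentation differs only cosmetically (Vieta's relation in place of the paper's direct bound on the smaller root, and an explicit $-\varepsilon$ estimate for the strict inequality), not in substance.
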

\begin{proof}
 First let us observe that $\bar V(t, x -K) \in \mathcal{C}^{1,2}((0,T)\times\mathbb{R})\cap\mathcal{C}([0,T]\times\mathbb{R})$ and the family $\{ \bar V(\tau, X_\tau-K); \tau \in  \mathcal{T}_{0,T} \}$  is uniformly integrable by Lemma \ref{unif}.
Now choose $(\bar t, \bar x) \in A$,  let $B \subset A$ be a neighborhood of $(\bar t, \bar x)$ with $\tau_B<T$, where $\tau_B$ denotes the first exit time of $X^{\bar t, \bar x}$ from $B$. 
 Then by Dynkin's formula  
 \[
\begin{split}
\bar V(\bar t, \bar x -K) &= \mathbb{E}[\bar V(\tau_B,X^{\bar t, \bar x}_{\tau_B} - K)] 
- \mathbb{E}\biggl[ \int_{\bar t}^ {\tau_B}\mathcal{L} \bar V(s,X^{\bar t, \bar x}_s - K)\,ds \biggr] \\
&> \mathbb{E}[\bar V(\tau_B,X^{\bar t, \bar x}_{\tau_B} - K)] \geq V(\bar t, \bar x).
\end{split}
\]
Hence $(\bar t, \bar x) \in \mathcal{C}$ and $A\subseteq\mathcal{C}$.

%First let us observe that, since  $\varphi(t,x) = \bar V(t, x-K)$, $\forall (t,x) \in (0,T) \times \mathbb{R}  - \mathcal{C}$ and $\varphi(t,x)$ solves the variational inequality   \eqref{var},  we have that
%\begin{equation} \label{c1}
%\mathcal{L} \bar V(t, x-K) \geq 0, \quad \forall (t,x) \in (0,T) \times \mathbb{R}  - \mathcal{C}.
%\end{equation} \nuovo{We argue by contradiction and suppose that} there exists $(\bar t, \bar x) \in A$ such that  $(\bar t, \bar x) \in (0,T) \times \mathbb{R}  - \mathcal{C}$  then by \eqref{c1} we reach a contraddiction. 

Next, recalling \eqref{eqn:L_X}, we have that 
\[
\mathcal{L} \bar V(t, x-K) = \bar V(t, x-K) \bigl( - \Psi(t)
- \eta e^{R(T-t)} (p+ RK)
+ \frac{1}{2} \eta^2 e^{2R(T-t)} \sigma_0^2 \bigr) ,
\]
so that $\mathcal{L} \bar V(t, x-K) < 0$ if and only if 
\begin{equation}
\label{eqn:Psi_ineq}
\Psi(t) > \frac{1}{2} \eta^2 e^{2R(T-t)} \sigma_0^2
- \eta e^{R(T-t)} (p+ RK) ,
\end{equation}
that is, using \eqref{eqn:Psi},
\[
\frac{1}{2} \eta^2 e^{2R(T-t)} \sigma_0^2 - \eta e^{R(T-t)} (q+ RK) +  \frac{1}{2}\frac{q^2}{\sigma_0^2} <0.
\]
Using a change of variable $z=e^{R(T-t)}$, we can rewrite the inequality as
\[
\frac{1}{2} \eta^2 \sigma_0^2 z^2 - \eta (q+ RK) z +  \frac{1}{2}\frac{q^2}{\sigma_0^2} < 0.
\]
Since $\eta^2 [(q+K)^2 - q^2]>0$ the associated equation admits two different solutions, so that the inequality \eqref{eqn:Psi_ineq} is satisfied by
\[
\frac{q+ RK - \sqrt{(q+ RK)^2 - q^2}}{\eta\sigma_0^2} < z < \frac{q+ RK + \sqrt{(q+ RK)^2 - q^2}}{\eta\sigma_0^2} .
\]
Recalling \eqref{eqn:qsmall}, we can verify that
\[
\frac{q+ RK - \sqrt{(q+ RK)^2 - q^2}}{\eta\sigma_0^2} < \frac{q}{\eta\sigma_0^2} < 1 ,
%< \frac{q+K + \sqrt{(q+K)^2 - q^2}}{\eta\sigma_0^2}.
\]
so that the inequality reads as
\[
t_A = T- \frac{1}{R} \log{\biggl( \frac{q+ RK + \sqrt{(q+ RK)^2 - q^2}}{\eta\sigma_0^2} \biggr)} < t < T .
\]
Depending on the model parameters, we can see that only the three cases above are possible. Equivalently, $\mathcal{L} \bar V(t, x-K) < 0$ if and only if $t_A<t<T$.
\end{proof}

\begin{Remark}\label{forse}
As consequence of Lemma  \ref{lemma:A}, recalling \eqref{eqn:h}, in \textit{Cases 1} and \textit{2}, that is when $0\leq t_A <T$, we have that 
$$\Psi(t) - h(t) + \eta RK e^{R(T-t)}>0, \quad \forall t > t_A ,$$
see equation \eqref{eqn:Psi_ineq}, which implies, $\forall t \geq t_A$
\[
\int_t^T (\Psi(s) - h(s) ) ds  + \int_t^T \eta R K e^{R(T-s)} ds > 0,
\]
equivalently 
\[
\int_t^T (\Psi(s) - h(s) ) ds  +  \eta  K e^{R(T-t)} > \eta  K
\]
for all $t \in [t_A, T)$.  In \textit{Case 3}, that is when $t_A=T$, since 
$$\Psi(t) - h(t) + \eta RK e^{R(T-t)} <0, \quad \forall t \in [0,T)$$
we have that 
\[
\int_t^T (\Psi(s) - h(s) ) ds + \eta  K e^{R(T-t)} < \eta  K,
\]
for all $t \in [0, T)$.
\end{Remark}

We need the following preliminary result to provide an explicit expression for the value function of the problem \eqref{stop}.

\begin{Lemma}
\label{lemma:Vtilde}
The function $\widetilde{V}(t, x) = Cg(t,x)$, $(t,x)\in (0,T)\times\mathbb{R}$, with $C$ any positive constant and $g$ as given in equation \eqref{g}, is a solution to the PDE 
$\mathcal{L} \widetilde{V}(t, x) = 0$, $(t,x)\in (0,T)\times\mathbb{R}$.\\
In particular, $g$ is a solution to the PDE with boundary condition $g(T,x)=e^{-\eta x}$ $\forall x\in\mathbb{R}$.
\end{Lemma}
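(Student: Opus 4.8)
The plan is to verify the claim by direct differentiation of the explicit formula \eqref{g}, exploiting the linearity of the operator $\mathcal{L}$. Since $\mathcal{L}$ is linear, $\mathcal{L}(Cg) = C\,\mathcal{L}g$ for any constant $C$, so it suffices to show that $\mathcal{L} g \equiv 0$ on $(0,T)\times\mathbb{R}$; the terminal condition $g(T,x)=e^{-\eta x}$ is immediate from \eqref{g}, since $\int_T^T h(s)\,ds = 0$ and $e^{R(T-T)}=1$.

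To compute $\mathcal{L} g$, I would write $g(t,x) = \exp\bigl\{-\eta x e^{R(T-t)} + \int_t^T h(s)\,ds\bigr\}$ and abbreviate $\beta(t) \doteq e^{R(T-t)}$, noting $\beta'(t) = -R\,\beta(t)$. Then the partial derivatives are $\frac{\partial g}{\partial x} = -\eta\,\beta(t)\,g$, $\frac{\partial^2 g}{\partial x^2} = \eta^2\,\beta(t)^2\,g$, and $\frac{\partial g}{\partial t} = \bigl(\eta R x\,\beta(t) - h(t)\bigr)\,g$. Substituting these into the definition \eqref{eqn:L_X} of $\mathcal{L}$, the two terms proportional to $Rx$ cancel against each other, and one is left with $\mathcal{L} g = \bigl(-h(t) - \eta p\,\beta(t) + \tfrac{1}{2}\sigma_0^2\eta^2\beta(t)^2\bigr)\,g$.

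It then remains to recognise, directly from the definition \eqref{eqn:h}, that $h(t) = \tfrac{1}{2}\eta^2\beta(t)^2\sigma_0^2 - \eta p\,\beta(t)$, which makes the bracket vanish identically; hence $\mathcal{L} g = 0$ on $(0,T)\times\mathbb{R}$, and therefore $\mathcal{L}(Cg) = 0$ as well. Equivalently, one could invoke the Feynman--Kac representation: $g(t,x) = \mathbb{E}\bigl[e^{-\eta X^{t,x}_T}\bigr]$ with $X^{t,x}$ the diffusion \eqref{W1} whose infinitesimal generator is precisely $\mathcal{L}$, so $g$ automatically solves the backward Kolmogorov equation with terminal data $e^{-\eta x}$ — the smoothness and the polynomial/exponential growth needed to justify this are transparent from the closed form \eqref{g}. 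There is no genuine obstacle in either route; the only point that requires some care is bookkeeping the exponential prefactors correctly (in particular tracking the sign coming from $\beta'(t)=-R\beta(t)$) and matching the surviving terms against the definition of $h$.
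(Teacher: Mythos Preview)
Your proof is correct and amounts to the same computation as the paper's: the paper makes the ansatz $\widetilde{V}(t,x)=e^{-\eta x e^{R(T-t)}}\gamma(t)$ and reduces $\mathcal{L}\widetilde{V}=0$ to the ODE $\gamma'(t)+h(t)\gamma(t)=0$, whereas you verify the given formula by direct substitution---but the bookkeeping of the $\beta(t)=e^{R(T-t)}$ factors and the cancellation against the definition of $h$ is identical. Your Feynman--Kac remark is a valid alternative justification.
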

\begin{proof}
Using the ansatz $\widetilde{V}(t, x) = e^{-\eta x e^{R(T-t)}} \gamma(t)$, we can reduce the PDE $\mathcal{L} \widetilde{V}(t, x) = 0$ to the following equation:
\[
e^{-\eta xe^{R(T-t)}} \gamma'(t) 
- \eta e^{R(T-t)} V(t, x) p
+ \frac{1}{2} \eta^2 e^{2R(T-t)} V(t, x) \sigma_0^2 = 0 ,
\]
which is equivalent to this ODE:
\[
\gamma'(t) + h(t)\gamma(t) = 0, \quad (t,x)\in (0,T)\times\mathbb{R},
\]
where the function $h$ is given in \eqref{eqn:h}.

Since the solution of the ODE is $\gamma(t)= C\, e^{\int_t^T h(s)\,ds}$, we get the expression of $\widetilde{V}$ as above.\\
Finally, setting $C=1$, $g$ satisfies the PDE above with the terminal condition $g(T,x)=e^{-\eta x}$ $\forall x\in\mathbb{R}$.
\end{proof}

%\nuovo {\begin{Remark}If the value function $V(t,x)$ is sufficiently regular, that is $\mathcal{C}^{1,2}((0,T)\times\mathbb{R})\cap\mathcal{C}([0,T]\times\mathbb{R})$, it solves the variational
%inequality \eqref{var}. Otherwise one needs  to introduce a weak notion of solution, that is, viscosity solution if $V(t,x)\in \mathcal{C}([0,T]\times\mathbb{R})$. 
%\end{Remark}}

Before proving the main result of this section, which is Theorem \ref{thm:auxstop}, we compare $g(t,x)$, given in \eqref{g}, with $\bar V(t,x-K)$.  

\begin{Lemma}\label{xi}
%There exists $\xi \in [0,T)$ such that 
%\begin{equation}
%\label{Ine}
%g(t,x) = \mathbb{E}\bigl[ e^{- \eta X^{t,x}_T}] < \bar V(t,x-K), \quad \forall (t,x) \in (\xi,T]\times\mathbb{R}.
%\end{equation}
Let 
\begin{equation}\label {H}
H(t) = \int_t^T ( \Psi(s) - h(s) ) ds + \eta K e^{R(T-t)}, \qquad t \in [0,T] ,
\end{equation}
then we distinguish two cases:
\begin{enumerate}
\item if $H(0) \geq 0$, then $g(t,x)  < \bar V(t,x-K)$ $ \forall (t,x) \in (0,T]\times\mathbb{R}$;
\item if $H(0) < 0$, then there exists $t^* \in (0,t_A)$ such that $g(t^*, x)  = \bar V(t^*, x-K)$ $\forall x \in \mathbb{R}$ and $g(t,x)  < \bar V(t,x-K)$ $\forall (t,x) \in (t^*,T]\times\mathbb{R}$.
\end{enumerate}
\end{Lemma}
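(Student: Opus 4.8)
The plan is to reduce the pointwise comparison of $g$ with $\bar V(\cdot,\cdot-K)$ to a study of the sign of the scalar function $H$ defined in \eqref{H}. Comparing the explicit formulas \eqref{ex} and \eqref{g} one sees immediately that
\[
g(t,x)=\bar V(t,x-K)\,e^{-H(t)},\qquad (t,x)\in[0,T]\times\mathbb{R};
\]
since $\bar V(t,x-K)>0$, the comparison does not depend on $x$, and for each fixed $x$ one has $g(t,x)<\bar V(t,x-K)\Leftrightarrow H(t)>0$ while $g(t,x)=\bar V(t,x-K)\Leftrightarrow H(t)=0$. Hence the whole statement is equivalent to a statement about the zeros and the sign of $H$ on $[0,T]$, and this is what I would prove.

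Next I would determine the shape of $H$. It is smooth on $[0,T]$ and $H(T)=\eta K>0$. For the derivative I would use the identity $\mathcal{L}\bar V(t,x-K)=\bar V(t,x-K)\,H'(t)$, which follows from $\mathcal{L}g=0$ (Lemma \ref{lemma:Vtilde}) together with $g=\bar V(\cdot,\cdot-K)\,e^{-H}$ and the fact that $e^{-H(t)}$ depends on $t$ only; alternatively one simply differentiates \eqref{H}. Thus $\operatorname{sign}H'(t)=\operatorname{sign}\mathcal{L}\bar V(t,x-K)$, and the quadratic sign analysis in the proof of Lemma \ref{lemma:A} — noting that the smaller root $\bigl(q+RK-\sqrt{(q+RK)^2-q^2}\bigr)/(\eta\sigma_0^2)$ lies below $q/(\eta\sigma_0^2)<1<e^{R(T-t)}$ for $t\in(0,T)$ — gives $H'<0$ on $(t_A,T)$ and $H'>0$ on $(0,t_A)$, one of these intervals being empty in Cases~1 and~3 of Lemma \ref{lemma:A}. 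Consequently $H$ is strictly increasing on $[0,t_A]$ and strictly decreasing on $[t_A,T]$; it has a global maximum at $t_A$, and since it decreases to $H(T)=\eta K>0$ we get $H\ge\eta K>0$ on all of $[t_A,T]$.

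With this unimodal picture the dichotomy is routine. If $H(0)\ge 0$, then on $[0,t_A]$ the function $H$ increases from a nonnegative value, so $H>0$ on $(0,t_A]$, and together with $H>0$ on $[t_A,T]$ this yields $H>0$ on $(0,T]$; by the reduction above this is the first assertion. If $H(0)<0$, then $t_A>0$ (otherwise $H$ would be strictly decreasing on $[0,T]$ with $H(0)>H(T)>0$, a contradiction), and on $[0,t_A]$ the continuous, strictly increasing $H$ passes from $H(0)<0$ to $H(t_A)\ge\eta K>0$, so there is a unique $t^*\in(0,t_A)$ with $H(t^*)=0$ and $H>0$ on $(t^*,t_A]$, hence $H>0$ on $(t^*,T]$; this is the second assertion.

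The main obstacle is the middle step, namely extracting the full sign pattern of $H'$ — and hence the unimodality of $H$ — in a clean way, because Lemma \ref{lemma:A} is stated so as to highlight only the region where $\mathcal{L}\bar V(t,x-K)<0$; one has to read off from the same quadratic that $\mathcal{L}\bar V(t,x-K)>0$ for $t<t_A$, and to treat the degenerate endpoints $t_A=0$ and $t_A=T$ separately. Everything else is bookkeeping; in fact the inequalities on $H$ needed above are already essentially recorded in Remark \ref{forse}.
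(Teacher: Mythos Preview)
Your proof is correct and follows essentially the same route as the paper: both reduce the comparison to the sign of $H$, establish that $H$ is increasing on $[0,t_A]$ and decreasing on $[t_A,T]$ with $H(T)=\eta K>0$, and then split cases on the sign of $H(0)$. Your identity $\mathcal{L}\bar V(t,x-K)=\bar V(t,x-K)\,H'(t)$ is a clean way to link the monotonicity of $H$ to Lemma~\ref{lemma:A}; the paper instead invokes Remark~\ref{forse} and a direct case split on $t_A\in\{0\}\cup(0,T)\cup\{T\}$, but the content is the same.
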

\begin{proof}
%Let us recall \eqref{disu}, that is   
%\begin{equation}\label{fin}V(t,x) \leq \mathbb{E}\bigl[ e^{- \eta X^{t,x}_T}] = g(t,x) \qquad \forall (t,x) \in [0,T)\times\mathbb{R},
%\end{equation}
%Since for any $K>0$ we have $g(T,x) = e^{-\eta x }  < \bar V(T,x-K) = e^{-\eta (x-K) }$, by the continuity w.r.t. $t \in [0,T]$ of  both the functions $g(t,x)$ and $\bar V(t,x-K)$ (see \eqref{g} and \eqref{ex}, respectively), there exists $\xi \in [0,T)$ such that 
%\eqref{lne} is fulfilled.
Let us observe that the  inequality  $g(t,x) <   \bar V(t,x-K)$ writes as
\[
e^{-\eta x e^{R(T-t)}}e^{\int_t^T h(s)\,ds} < e^{-\eta (x-K) e^{R(T-t)} } e^{\int_t^T \Psi(s) ds},
%\qquad \forall (t,x)\in [0,T)\times\mathbb{R},
\]
that is
\[
e^{\int_t^T (\Psi(s) - h(s))\,ds}e^{\eta K e^{R(T-t)} }  > 1
\Leftrightarrow  \int_t^T (\Psi(s) - h(s))\,ds + \eta K e^{R(T-t)} = H(t) > 0. %\qquad \forall t\in[0,T).
\]
%\begin{equation} \label{Cnuova}
%\int_t^T (\Psi(s) - h(s))\,ds + \eta K e^{R(T-t)} > 0, \quad  \forall t \in [0,T).
%\end{equation}
We distinguish three cases:  
\begin{itemize}
\item [(i)] when $0\leq t_A <T$, we have that $H(t) \geq \eta K >0$ $\forall t > t_A $ by Remark \ref{forse} and it easy to verify that $H$ is increasing in $[0,t_A]$, while it is decreasing in $[t_A,T]$. Hence,  it takes the maximum value at $t=t_A$. As a consequence, if $H(0) \geq 0$ we have that $H(t)>0$ $\forall t \in (0,T]$, being $H(T)=\eta K>0$.
% that is $g(t,x)  < \bar V(t,x-K)$ $ \forall (t,x) \in (0,T]\times\mathbb{R}$. 

Otherwise, if $H(0) < 0$ there exists $t^* \in (0,t_A)$ such that $H(t^*) =0$, that is $g(t^*, x)  = \bar V(t^*, x-K)$ $\forall x \in \mathbb{R}$, and $H(t) >0$ $\forall (t,x) \in (t^*,T]$, that is $g(t,x)  < \bar V(t,x-K)$ $\forall (t,x) \in (t^*,T]\times\mathbb{R}$;
\item [(ii)]  when $t_A =T$, by Lemma \ref{lemma:A} we get that $H$ is increasing in $[0,T]$ and we can repeat the same arguments as in the previous case to distinguish the two casese $H(0) \geq 0$ and $H(0) <0$, obtaining the same results;
\item[(iii)] when $t_A=0$, by Remark \ref{forse}  we know that $H$ is decreasing in $[0,T]$, so that $H(t) \geq \eta K >0$ $\forall t \in [0,T]$, that is $g(t,x)  < \bar V(t,x-K)$, $ \forall (t,x) \in (0,T]\times\mathbb{R}$. Moreover, in this case $H(0)\geq 0$.
\end{itemize} 
Summaring, we obtain our statement.
\end{proof}

We now prove some properties of the continuation region.

\begin{Proposition}\label{PropA}
Let 
 \begin{equation}
\mathcal{C} = \{ (t,x) \in (0,T) \times \mathbb{R} :  V(t,x) < \bar V(t,x-K) \}.
\end{equation}
Then 
%$\mathcal{C} \neq \emptyset$ and
we distinguish two cases:
\begin{enumerate}
\item if $H(0)\geq 0$, then $\mathcal{C} = (0,T) \times \mathbb R$,
\item if $H(0) < 0$, then $ (t^*,T) \times \mathbb R \subseteq \mathcal{C}$, where $t^* \in (0, t_A)$ is the unique solution to equation
$$H(t) = \int_t^T ( \Psi(s) - h(s) ) ds + \eta K e^{R(T-t)}=0.$$
\end{enumerate}
\end{Proposition}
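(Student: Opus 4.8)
The plan is to obtain both statements as essentially immediate corollaries of Lemma \ref{xi}, combined with the trivial bound $V(t,x)\le g(t,x)$ from \eqref{disu} and the monotonicity of the function $H$ already recorded in the proof of Lemma \ref{xi}. In fact all the analytic work — the computation of $\mathcal{L}\bar V(\cdot,\cdot-K)$, the identification of the set $A=(t_A,T)\times\mathbb{R}$, and the sign analysis of $H$ — has been carried out in Lemmas \ref{lemma:A} and \ref{xi}, so only a short combination argument remains.

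First I would treat the case $H(0)\ge 0$. By Case 1 of Lemma \ref{xi}, $g(t,x)<\bar V(t,x-K)$ for every $(t,x)\in(0,T]\times\mathbb{R}$. Since the second inequality in \eqref{disu} gives $V(t,x)\le g(t,x)$, we deduce $V(t,x)\le g(t,x)<\bar V(t,x-K)$ on all of $(0,T)\times\mathbb{R}$, so that every point of $(0,T)\times\mathbb{R}$ belongs to $\mathcal{C}$. Together with the obvious inclusion $\mathcal{C}\subseteq(0,T)\times\mathbb{R}$ this yields the equality $\mathcal{C}=(0,T)\times\mathbb{R}$.

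Next I would treat the case $H(0)<0$. I first record the existence and uniqueness of $t^*$: since $H(0)<0$, we are necessarily in case (i) or (ii) of the proof of Lemma \ref{xi} (in case (iii) one has $t_A=0$ and $H(0)\ge 0$), and in both of these cases $H$ is strictly increasing on $[0,t_A]$ with $H(t_A)\ge\eta K>0$, hence admits a unique zero $t^*\in(0,t_A)$; moreover on $[t_A,T]$ one has $H\ge\eta K>0$ by Remark \ref{forse} (case (i)) or $H$ increasing up to $H(T)=\eta K>0$ (case (ii)), so $t^*$ is the unique solution of $H(t)=0$ on $(0,T)$. Then, exactly as in the previous paragraph, Case 2 of Lemma \ref{xi} gives $g(t,x)<\bar V(t,x-K)$ for all $(t,x)\in(t^*,T]\times\mathbb{R}$, and \eqref{disu} gives $V(t,x)\le g(t,x)$, whence $V(t,x)<\bar V(t,x-K)$ for all $(t,x)\in(t^*,T)\times\mathbb{R}$, i.e. $(t^*,T)\times\mathbb{R}\subseteq\mathcal{C}$.

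I do not expect any real obstacle: the only point deserving a little care is extracting the uniqueness of $t^*$ from the monotonicity of $H$ on $[0,t_A]$ used in Lemma \ref{xi}. It is also worth emphasizing — and I would state this explicitly — that the proposition gives the full equality $\mathcal{C}=(0,T)\times\mathbb{R}$ only when $H(0)\ge 0$; when $H(0)<0$ it provides merely the inclusion $(t^*,T)\times\mathbb{R}\subseteq\mathcal{C}$, the reverse inclusion $\mathcal{C}\subseteq(t^*,T)\times\mathbb{R}$ being established afterwards by exhibiting an explicit candidate for $V$ and invoking the Verification Theorem \ref{verification}.
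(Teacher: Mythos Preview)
Your proposal is correct and follows exactly the same route as the paper: combine the trivial bound $V(t,x)\le g(t,x)$ from \eqref{disu} with the comparison $g(t,x)<\bar V(t,x-K)$ supplied by Lemma~\ref{xi} to conclude. The paper's proof is in fact even terser than yours (it omits the discussion of uniqueness of $t^*$, which is already contained in Lemma~\ref{xi}), but your added remarks on uniqueness and on the asymmetry between the two cases are accurate and harmless.
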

\begin{proof}
%and by   \eqref{W1} we have that
%\[
%\begin{split}
% \mathbb{E}\bigl[ e^{-\eta X^{t,x}_T}  \bigr] = & \mathbb{E}\bigl[  e^{-\eta x e^{R (T-t)}}  e^{- \eta \int_t^T e^{R(T-s)} p ds } e^{- \eta \int_t^T  e^{R(T-s)} \sigma_0 dW_s} \bigr] =\\
% & e^{-\eta x e^{R (T-t)}}  e^{\frac{\eta}{R} (e^{R(T-t)} -1)p} e^{\frac{1}{4} \eta^2\sigma^2_0 (e^{2R(T-t)} -1)}.
%\end{split} 
%\]
%By remark \ref{idea}  
%$$g(t,x)= \mathbb{E}\bigl[ e^{- \eta X^{t,x}_T}]> \bar V(t,x), \quad \forall (t,x) \in [0,T)\times\mathbb{R}$$
%and the equality holds at time $t=T$.

%We prove that $\mathcal{C} \neq \emptyset$ by contradiction. If $\mathcal{C}=\emptyset$, then 
%$$V(t,x) = \bar V(t, x-K) \qquad \forall (t,x) \in (0,T)\times\mathbb{R}.$$
%
%Recalling \eqref{disu} we would have that
%$$g(t,x) \geq V(t,x) = \bar V(t, x-K), \forall (t,x) \in (0,T)\times\mathbb{R}$$
%which  is in contrast with \eqref{Ine}, proving that we must have $\mathcal{C} \neq \emptyset$.

We apply Lemma \ref{xi}.  
In Case 1,  we have that $V(t,x) \leq g(t,x) < \bar V(t, x-K)$ $\forall (t,x) \in (0,T)\times\mathbb{R}$, that is 
$\mathcal{C} = (0,T) \times \mathbb R$.
In Case 2,  we have that $V(t,x) \leq g(t,x) < \bar V(t, x-K)$ $\forall (t,x) \in (t^*,T)\times\mathbb{R}$,  which implies $ (t^*,T) \times \mathbb R \subseteq \mathcal{C}$,
and this concludes the proof.
%$\forall (t,x) \in [0,T)\times\mathbb{R}$, $\tau^*_{t,x} =t$ is an optimal stopping time for problem \eqref{stop} and as a consequence
\end{proof}

Now we are ready for the main result of this section.

\begin{Theorem}
\label{thm:auxstop}
Let $H$ be given in \eqref{H}.
%$\Psi$ and $h$ given by equations \eqref{eqn:Psi} and \eqref{eqn:h}, respectively and $H$ 
%\[ \Psi(t) = \eta e^{R(T-t)}(q-p) - \frac{1}{2}\frac{q^2}{\sigma_0^2} , \]
%\[ h(t) = \eta e^{R(T-t)} \bigl( \frac{1}{2} \eta e^{R(T-t)} \sigma_0^2 - p \bigr). \]
The solution of the optimal stopping problem \eqref{stop} takes different forms, depending on the model parameters. Precisely, we have two cases:
\begin{enumerate}
\item if  $H(0) =  \int_0^T ( \psi(s) - h(s) ) ds + \eta K e^{RT} \geq 0$,
%\[y^* = \frac{q+RK + \sqrt{(q+RK)^2 - q^2}}{\eta\sigma_0^2} \ge e^{RT},\]
then the continuation region is $\mathcal{C} = (0,T) \times \mathbb R$, the value function is
\[
V(t,x) = g(t,x) = e^{-\eta x e^{R(T-t)}} e^{\int_t^T h(s)\,ds}, \qquad (t,x)\in [0,T] \times\mathbb{R}
 \]
and  $\tau^*_{t,x} = T$ is an optimal stopping time;
\item if $H(0)=  \int_0^T ( \psi(s) - h(s) ) ds + \eta K e^{RT}  < 0$, then $\mathcal{C} = (t^*, T) \times \mathbb R$, where $t^* \in (t_A,T)$ is the unique solution to $H(t)=0$, the value function is    
\begin{equation}
V(t,x) = 
\begin{cases}
 \bar V(t, x-K)= e^{-\eta (x-K) e^{R(T-t)} } e^{\int_t^T \Psi(s) ds} 
& (t,x)\in[0, t^*]\times\mathbb{R} \\
g(t,x)=   \mathbb{E}\bigl[ e^{- \eta X^{t,x}_T}] =  e^{-\eta x e^{R(T-t)}} e^{\int_t^T h(s)\,ds} & (t,x)\in (t^*,T]\times\mathbb{R} 
\end{cases}
\end{equation}
and $\tau^*_{t,x}$, given by
\begin{equation}
\label{eqn:tau*}
\tau^*_{t,x} = 
\begin{cases}
t 	& (t,x)\in[0,t^*]\times\mathbb{R} \\
T 	& (t,x)\in(t^*,T]\times\mathbb{R} ,
\end{cases}
\end{equation}
is an optimal stopping time.
\end{enumerate}
\end{Theorem}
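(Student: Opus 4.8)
The plan is to prove both cases by applying the Verification Theorem (Theorem~\ref{verification}) to an explicit candidate $\varphi$ assembled from the two functions $g$ (equation~\eqref{g}) and $\bar V(\cdot,\cdot-K)$ already studied; the discontinuity of the stopping reward in the time variable is harmless because it has already been absorbed into the localized Dynkin/Fatou argument inside Theorem~\ref{verification}.

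\textbf{Case 1} ($H(0)\ge0$). I would take $\varphi(t,x)=g(t,x)$ on $[0,T]\times\mathbb{R}$ and $t^*=0$, and then check the five hypotheses of Theorem~\ref{verification}. Assumption~1 is Proposition~\ref{PropA}, which gives $\mathcal{C}=(0,T)\times\mathbb{R}$ (consistently, with $\varphi=g$ the set in~\eqref{eqn:C_reg} equals $(0,T)\times\mathbb{R}$ by the first case of Lemma~\ref{xi}); assumption~2 is immediate from the closed form~\eqref{g}; assumption~3 follows from the first case of Lemma~\ref{xi} ($g<\bar V(\cdot,\cdot-K)$ on $(0,T]\times\mathbb{R}$), from $H(0)\ge0$ (which gives $g(0,x)\le\bar V(0,x-K)$), and from $g(T,x)=e^{-\eta x}$; assumption~4 is Lemma~\ref{lemma:Vtilde} on $\mathcal{C}=(0,T)\times\mathbb{R}$, the other inequality in~\eqref{var1} being vacuous since $t^*=0$; assumption~5 is the uniform integrability of $\{g(\tau,X_\tau)\}$ from Lemma~\ref{unif}. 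Theorem~\ref{verification} then yields $V=g$ and $\tau^*_{t,x}=T$.

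\textbf{Case 2} ($H(0)<0$). Let $t^*$ be the unique zero of $H$ in $(0,t_A)$; existence and uniqueness follow from the monotonicity of $H$ established inside Lemma~\ref{xi} ($H$ strictly increasing on $[0,t_A]$ with $H(0)<0<H(t_A)$). Take
\[
\varphi(t,x)=
\begin{cases}
\bar V(t,x-K), & (t,x)\in[0,t^*]\times\mathbb{R},\\
g(t,x), & (t,x)\in(t^*,T]\times\mathbb{R};
\end{cases}
\]
$\varphi$ is continuous across $t=t^*$ precisely because $H(t^*)=0$ is the identity $g(t^*,x)=\bar V(t^*,x-K)$. Now assumption~1 holds because $\varphi=\bar V(\cdot,\cdot-K)$ on $[0,t^*]\times\mathbb{R}$ (so these points are excluded from~\eqref{eqn:C_reg}) while $\varphi=g<\bar V(\cdot,\cdot-K)$ on $(t^*,T)\times\mathbb{R}$ by the second case of Lemma~\ref{xi}, whence $\mathcal{C}=(t^*,T)\times\mathbb{R}$ with $t^*\in(0,t_A)\subseteq[0,T)$; assumption~2 is clear piecewise; assumption~3 combines the equality on $[0,t^*]$, the second case of Lemma~\ref{xi} on $(t^*,T]$, and $\varphi(T,x)=e^{-\eta x}$; assumption~4 holds because $\mathcal{L}\varphi=\mathcal{L}g=0$ on $(t^*,T)\times\mathbb{R}$ by Lemma~\ref{lemma:Vtilde}, while on $(0,t^*)\times\mathbb{R}$ we have $\mathcal{L}\varphi=\mathcal{L}\bar V(t,x-K)\ge0$ since $t<t^*<t_A$ and, by Lemma~\ref{lemma:A}, $\mathcal{L}\bar V(\cdot,\cdot-K)<0$ only on $(t_A,T)\times\mathbb{R}$; assumption~5 follows from $0\le\varphi\le\bar V(\cdot,\cdot-K)+g$ together with the uniform integrability of $\{\bar V(\tau,X_\tau-K)\}$ and $\{g(\tau,X_\tau)\}$ from Lemma~\ref{unif}. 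Theorem~\ref{verification} then gives $\varphi=V$, i.e. the stated piecewise expression for the value function — in particular $V=\bar V(\cdot,\cdot-K)$ on $[0,t^*]\times\mathbb{R}$ forces $\mathcal{C}=(t^*,T)\times\mathbb{R}$ exactly — and identifies the optimal stopping time with the first exit time of $X^{t,x}$ from $\mathcal{C}$: if $t\le t^*$ then $(t,x)\notin\mathcal{C}$, so $\tau^*_{t,x}=t$; if $t>t^*$ then, since the time coordinate only increases and $X^{t,x}_s$ stays real, $(s,X^{t,x}_s)\in\mathcal{C}$ for every $s<T$, so $\tau^*_{t,x}=T$. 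This is exactly~\eqref{eqn:tau*}.

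The only genuinely delicate point is assumption~4 on the stopping region in Case~2: it rests on the strict inequality $t^*<t_A$, so that Lemma~\ref{lemma:A} yields $\mathcal{L}\bar V(t,x-K)\ge0$ throughout $(0,t^*)\times\mathbb{R}$; equivalently, the zero $t^*$ of $H$ falls in $(0,t_A)$, which is exactly the interval located in Lemma~\ref{xi} and Proposition~\ref{PropA}. Everything else is bookkeeping with the explicit formulas for $g$, $\bar V$ and $\Psi$.
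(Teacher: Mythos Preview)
Your proposal is correct and follows essentially the same route as the paper: in both cases you apply the Verification Theorem (Theorem~\ref{verification}) to the same candidate $\varphi$ (namely $g$ in Case~1 and the piecewise $\bar V(\cdot,\cdot-K)$/$g$ function in Case~2), invoking Proposition~\ref{PropA}, Lemma~\ref{xi}, Lemma~\ref{lemma:Vtilde} and Lemma~\ref{unif} at the same junctures. If anything, your write-up is more explicit than the paper's on the verification of assumption~4 in the stopping region for Case~2: you correctly pin it on $t^*<t_A$ together with Lemma~\ref{lemma:A}, whereas the paper only gestures at ``the results of Section~\ref{sec:reins}''.
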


\begin{proof}
We prove the two cases separately, applying Theorem \ref{verification} in each one.\\

\textit{Case 1}\\
The continuation region is $\mathcal{C} = (0,T) \times \mathbb R$ by Proposition \ref{PropA}, hence assumption 1 of Theorem  \ref{verification} is fulfilled. Moreover, $\tau^*_{t,x} = T$. Observing that 
$$g(t,x) = e^{-\eta x e^{R(T-t)}} e^{\int_t^T h(s)\,ds} \in \mathcal{C}^{1,2}((0,T)\times\mathbb{R})\cap\mathcal{C}([0,T]\times\mathbb{R}),$$
the assumption 2 of Theorem  \ref{verification} is clearly matched. The assumption 3 is implied by Lemma \ref{xi}. Moreover,  the variational inequality \eqref{var1} (assumption 4) is fulfilled by Lemma \ref{lemma:Vtilde}.
Finally, by Lemma \ref{unif} the last condition in Theorem \ref{verification} is fulfilled.

\textit{Case 2}\\
$\mathcal{C} = (t^*, T) \times \mathbb R$ clearly satisfies the first assumption of Theorem \ref{verification}. Taking
\begin{equation*}
\varphi(t,x) = 
\begin{cases}
 \bar V(t, x-K)= e^{-\eta (x-K) e^{R(T-t)} } e^{\int_t^T \Psi(s) ds} 
& (t,x)\in[0, t^*]\times\mathbb{R} \\
g(t,x)=   \mathbb{E}\bigl[ e^{- \eta X^{t,x}_T}] =  e^{-\eta x e^{R(T-t)}} e^{\int_t^T h(s)\,ds} & (t,x)\in (t^*,T]\times\mathbb{R} ,
\end{cases}
\end{equation*}
observing that Lemma \ref{xi} ensures the existence of $t^* \in (0,t_A)$ such that $g(t^*,x) = \bar V(t^*, x-K)$ when $H(0) < 0$, the smoothness conditions of the second assumption are matched.
Moreover, according to Lemma \ref{xi}, $g(t,x) < \bar V(t, x-K)$ $\forall (t,x) \in (t^*, T]$ and the assumption 3 is fulfilled.
That the variational inequality \eqref{var1} is satisfied by $\varphi$ is a consequence of the results of Section \ref{sec:reins} and of Lemma \ref{lemma:Vtilde}. Finally, Lemma \ref{unif} implies the fifth assumption of Theorem \ref{verification} and the proof is complete. 
\end{proof}

\section{Solution to the original problem}
\label{sec:solution}

As a direct consequence of the results obtained in the previous section and Theorem \ref{reduction}, we provide an explicit solution to the optimal reinsurance problem under 
fixed cost given in \eqref{V}. 

\begin{Theorem}
\label{prop:pb_original_sol}
Let us define
\begin{equation}
\label{eqn:K*}
K^* =  -\frac{q}{R} (1-e^{-RT}) +  \frac{1}{2}\frac{q^2}{\eta \sigma_0^2}T e^{- RT} 
+ \frac{1}{4 R} \eta\sigma^2_0 (e^{RT} -e^{- RT}) > 0.
\end{equation}
Two cases are possible, depending on the model parameters:
\begin{enumerate}
\item 
if  $K\ge K^*$, then the value function given in \eqref{V} is
\[
V(t,x) = g(t,x) = \mathbb{E}\bigl[ e^{- \eta X^{t,x}_T}] 
\]
and the optimal strategy is $\alpha^*=(T,1)$, that is no reinsurance is purchased;
\item if $K< K^*$, then the value function is
\[
V(t,x) = 
\begin{cases}
\bar V(t, x-K)= e^{-\eta (x-K) e^{R(T-t)} } e^{\int_t^T \Psi(s) ds} 
& (t,x)\in[0, t^*]\times\mathbb{R} \\
g(t,x)=   \mathbb{E}\bigl[ e^{- \eta X^{t,x}_T}] =  e^{-\eta x e^{R(T-t)}} e^{\int_t^T h(s)\,ds} 
& (t,x)\in (t^*,T]\times\mathbb{R} ,
\end{cases}
\]
where $t^* \in (0,T)$ is the unique solution to the equation
\[
\eta (\frac{q}{R} +  K) e^{R(T-t)}  -  \frac{1}{2}\frac{q^2}{\sigma_0^2}(T-t) - \frac{1}{4 R} \eta^2\sigma^2_0 (e^{2R(T-t)} -1) -  \frac{\eta q}{R}=0,
\]
and the optimal strategy is $\alpha^*=(\tau^*_{t,x},\{ \frac{q}{\eta\sigma_0^2}e^{-R(T-s)} \}_{s\in[\tau^*_{t,x},T]})$, with $\tau^*_{t,x}$ given in \eqref{eqn:tau*}.
 \end{enumerate}
\end{Theorem}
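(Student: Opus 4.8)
The plan is to recast the dichotomy of Theorem~\ref{thm:auxstop}, which is phrased in terms of the sign of $H(0)$, as the dichotomy $K\ge K^*$ versus $K<K^*$, and then to read off the optimal strategy for the original problem~\eqref{V} through the reduction of Theorem~\ref{reduction}. First I would compute $H(0)$ in closed form: from the definition~\eqref{H} and the primitive $\int_t^T[\Psi(s)-h(s)]\,ds$ already computed in Remark~\ref{idea}, evaluated at $t=0$, one obtains
\[
H(0)=\frac{\eta q}{R}(e^{RT}-1)-\frac12\frac{q^2}{\sigma_0^2}T-\frac{1}{4R}\eta^2\sigma_0^2(e^{2RT}-1)+\eta K e^{RT}.
\]
Since $H(0)$ is affine and strictly increasing in $K$ (the coefficient of $K$ being $\eta e^{RT}>0$), the inequality $H(0)\ge0$ is equivalent to $K\ge K^*$, where $K^*$ is the unique value annihilating $H(0)$; solving $H(0)=0$ for $K$ returns exactly formula~\eqref{eqn:K*}. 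Hence \textit{Case 1} of the present statement coincides with \textit{Case 1} of Theorem~\ref{thm:auxstop}, and similarly for \textit{Case 2}, so that the value function, the continuation region and the optimal stopping time $\tau^*_{t,x}$ carry over verbatim.

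Next I would check the strict positivity $K^*>0$. By construction $\eta K^* e^{RT}=-\int_0^T[\Psi(s)-h(s)]\,ds$, so it suffices to show that this integral is strictly negative. Using~\eqref{eqn:Psi} and~\eqref{eqn:h} one gets the identity
\[
\Psi(s)-h(s)=-\frac12\Bigl(\frac{q}{\sigma_0}-\eta\sigma_0 e^{R(T-s)}\Bigr)^2,
\]
which is strictly negative for every $s\in[0,T]$: indeed condition~\eqref{eqn:qsmall}, $q<\eta\sigma_0^2$, combined with $e^{R(T-s)}\ge1$, gives $q<\eta\sigma_0^2 e^{R(T-s)}$, so the squared term never vanishes. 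Integrating over $(0,T)$ with $T>0$ then yields $\int_0^T[\Psi(s)-h(s)]\,ds<0$, whence $K^*>0$.

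Finally I would transfer the conclusion to the original problem. In \textit{Case 1} ($K\ge K^*$), Theorem~\ref{thm:auxstop} gives $\mathcal{C}=(0,T)\times\mathbb{R}$ and $\tau^*_{t,x}=T$, and Theorem~\ref{reduction} (with the convention $\alpha^*=(T,1)$ on $\{\tau^*_{t,x}=T\}$) shows that $\alpha^*=(T,1)$ is optimal for~\eqref{V}, i.e.\ it is never worth subscribing the reinsurance. In \textit{Case 2} ($K<K^*$), Theorem~\ref{thm:auxstop} gives the two-region value function and the stopping time~\eqref{eqn:tau*}, and Theorem~\ref{reduction} yields that $\alpha^*=\bigl(\tau^*_{t,x},\{u^*_s\}_{s\in[\tau^*_{t,x},T]}\bigr)$, with $u^*_s=\frac{q}{\eta\sigma_0^2}e^{-R(T-s)}$ from~\eqref{u_star}, is optimal; collecting the terms proportional to $e^{R(T-t)}$ in the defining equation $H(t^*)=0$ produces the displayed equation for $t^*$, whose unique solvability in $(0,T)$ was already obtained in Lemma~\ref{xi}. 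The only step that is not purely mechanical is the strict positivity $K^*>0$, which rests on the strict inequality in~\eqref{eqn:qsmall}; everything else is bookkeeping between the previously established theorems.
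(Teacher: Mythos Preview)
Your proof is correct and follows essentially the same approach as the paper: compute $H(0)$ explicitly via Remark~\ref{idea}, translate the sign condition on $H(0)$ into the dichotomy $K\gtrless K^*$, and then invoke Theorems~\ref{thm:auxstop} and~\ref{reduction}. Your argument for $K^*>0$ via the perfect-square identity $\Psi(s)-h(s)=-\tfrac12\bigl(q/\sigma_0-\eta\sigma_0 e^{R(T-s)}\bigr)^2$ is in fact sharper than the paper's appeal to Remark~\ref{idea}, which as written yields only the non-strict inequality $\int_0^T[\Psi(s)-h(s)]\,ds\le0$.
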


\begin{proof}
 Let us observe that, using Remark \ref{idea}, 
\[
\begin{split}
H(t) =  & \int_t^T ( \Psi(s) - h(s) ) ds +  \eta K e^{R(T-t)} \\
= & \frac{\eta q}{R} (e^{R(T-t)}-1) -  \frac{1}{2}\frac{q^2}{\sigma_0^2}(T-t) - \frac{1}{4 R} \eta^2\sigma^2_0 (e^{2R(T-t)} -1) +  \eta K e^{R(T-t)}.
\end{split} 
\]
and the condition $H(0) \geq 0$ is equivalent to
%\[ 
%K \geq -  \frac{1}{\eta} e^{- RT} \int_0^T ( \Psi(s) - h(s) ) ds =  \frac{q}{R} (e^{RT}-1) +  \frac{1}{2}\frac{q^2}{\eta \sigma_0^2}T e^{- RT} + \frac{1}{4 R} \eta\sigma^2_0 (e^{RT} -e^{- RT}) ,
%\]
%while the condition $H(0) < 0$ can be written as
% \[ 
%K < -  \frac{1}{\eta} e^{- RT} \int_0^T ( \Psi(s) - h(s) ) ds =  \frac{q}{R} (e^{RT}-1) +  \frac{1}{2}\frac{q^2}{\eta \sigma_0^2}T e^{- RT} + \frac{1}{4 R} \eta\sigma^2_0 (e^{RT} -e^{- RT}).
%\]
\[ 
K \geq -  \frac{1}{\eta} e^{- RT} \int_0^T ( \Psi(s) - h(s) ) ds = K^*,
\]
while the condition $H(0) < 0$ can be written as $K< K^*$. That $K^*>0$ follows by Remark \ref{idea}.
Then the statement is a consequence of Theorem \ref{thm:auxstop}.
  \end{proof}
 
Let us briefly comment the two cases of Theorem \ref{prop:pb_original_sol}. Case 1 corresponds to no reinsurance. That is, the insurer is not willing to subscribe a contract at any time of the selected time horizon. Besides the insurer, this result is relevant for the reinsurance company. We have proven that there exists a threshold $K^*>0$ (see equation \eqref{eqn:K*}), which represents the maximum initial cost that the insurer is willing to pay to buy reinsurance. If the reinsurer chooses a subscription cost higher than $K^*$, then the insurer will not buy protection from her.

In Case 2, at any time $t\in[0,T]$, the insurer immediately subscribes the reinsurance agreement if the time instant $t^*$ has not passed, applying the optimal retention level from that moment on; otherwise, if $t>t^*$, no reinsurance will be bought.

We notice that it is never optimal to wait for buying reinsurance. That is, it is convenient either to immediately sign the contract, or not to subscribe at all.

In particular, at the starting time $t=0$, given an initial wealth $R_0>0$, we have these cases:
\begin{enumerate}
\item if $K\ge K^*$, then $\alpha^*=(T,1)$, that is no reinsurance is purchased;
\item if $K< K^*$, then $\alpha^*=(0,\{ \frac{q}{\eta\sigma_0^2}e^{-R(T-s)} \}_{s\in[0,T]}$, that is
the optimal  choice for the insurer consists in stipulating the contract at the initial time, selecting the optimal retention level (as in the pure reinsurance problem).
\end{enumerate}

By the expression \eqref{eqn:K*} we can show that $K^*$ is increasing with respect to $\eta$ and $\sigma_0$, while it is decreasing with respect to $q$. More details will be given in the next section by means of numerical simulations.\\

Another relevant result for the reinsurance company is the following.

\begin{Proposition}
For any fixed cost $K>0$ there exists $q^*\in(0,+\infty)$ (depending on $K$) such that
\begin{enumerate}
\item 
if  $q>q^*$, then
\[
V(t,x) = g(t,x) = \mathbb{E}\bigl[ e^{- \eta X^{t,x}_T}] 
\]
and $\alpha^*=(T,1)$, that is no reinsurance is purchased;
\item otherwise
\[
V(t,x) = 
\begin{cases}
\bar V(t, x-K)= e^{-\eta (x-K) e^{R(T-t)} } e^{\int_t^T \Psi(s) ds} 
& (t,x)\in[0, t^*]\times\mathbb{R} \\
g(t,x)=   \mathbb{E}\bigl[ e^{- \eta X^{t,x}_T}] =  e^{-\eta x e^{R(T-t)}} e^{\int_t^T h(s)\,ds} 
& (t,x)\in (t^*,T]\times\mathbb{R} ,
\end{cases}
\]
where $t^* \in (0,T)$ is the unique solution to the equation
\[
\eta (\frac{q}{R} +  K) e^{R(T-t)}  -  \frac{1}{2}\frac{q^2}{\sigma_0^2}(T-t) - \frac{1}{4 R} \eta^2\sigma^2_0 (e^{2R(T-t)} -1) -  \frac{\eta q}{R}=0,
\]
and $\alpha^*=(\tau^*_{t,x},\{ \frac{q}{\eta\sigma_0^2}e^{-R(T-s)} \}_{s\in[\tau^*_{t,x},T]})$, with $\tau^*_{t,x}$ is given in \eqref{eqn:tau*}.
 \end{enumerate}
\end{Proposition}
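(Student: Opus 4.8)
The plan is to recast the dichotomy of Theorem~\ref{prop:pb_original_sol} as a statement about the reinsurer's net profit $q$. Recall that, once $K>0$ and the remaining parameters $(\eta,\sigma_0,R,T)$ are fixed, Theorem~\ref{prop:pb_original_sol} yields the ``no reinsurance'' expression for the value function~\eqref{V} when $K\ge K^*$ and the ``immediate reinsurance'' expression when $K<K^*$, where $K^*=K^*(q)$ is the explicit quantity in~\eqref{eqn:K*}; equivalently, the two regimes correspond to $H(0)\ge 0$ and $H(0)<0$, with $H$ as in~\eqref{H}. Since here $q$ is constrained by $q>p$ (non-cheap reinsurance) and by~\eqref{eqn:qsmall}, I shall view $K^*(\cdot)$ as a function of $q$ on the admissible interval $(p,\eta\sigma_0^2)$ and locate the threshold between the two regimes there.

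The core of the proof is the strict monotonicity of $q\mapsto K^*(q)$. Differentiating~\eqref{eqn:K*} gives
\[
\frac{\partial K^*}{\partial q}(q) = -\frac{1}{R}\,(1-e^{-RT}) + \frac{q}{\eta\sigma_0^2}\,T e^{-RT},
\]
which is strictly negative if and only if $q<\dfrac{\eta\sigma_0^2\,(e^{RT}-1)}{RT}$. Because $e^{RT}-1>RT$, this upper bound exceeds $\eta\sigma_0^2$, so by~\eqref{eqn:qsmall} the inequality holds for \emph{every} admissible $q$. Consequently $K^*(\cdot)$ is continuous and strictly decreasing on $(p,\eta\sigma_0^2)$; equivalently $H(0)$ --- which is affine and increasing in $K$ and satisfies $H(0)\ge 0\iff K\ge K^*(q)$ --- is strictly increasing in $q$ on that interval.

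Next I would define $q^*\in(0,+\infty)$ by inverting $K^*(\cdot)$: if $K\in\bigl(K^*(\eta\sigma_0^2),\,K^*(p)\bigr)$ (note $K^*(p)>K^*(\eta\sigma_0^2)$ by the monotonicity, both values read off from~\eqref{eqn:K*}), let $q^*\in(p,\eta\sigma_0^2)$ be the unique root of $K^*(q)=K$; if $K\ge K^*(p)$ put $q^*:=p$; and if $K\le K^*(\eta\sigma_0^2)$ put $q^*:=\eta\sigma_0^2$. In every case $q^*\in(0,+\infty)$. By the strict monotonicity above, $q>q^*$ forces $K^*(q)<K$, hence $H(0)>0$ and Theorem~\ref{prop:pb_original_sol} puts us in its Case~1 ($V=g$, $\alpha^*=(T,1)$); while $q<q^*$ forces $K^*(q)>K$, hence $H(0)<0$ and we land in its Case~2, with $t^*\in(0,T)$ the unique zero of $H$ and optimal strategy $\alpha^*=(\tau^*_{t,x},\{\frac{q}{\eta\sigma_0^2}e^{-R(T-s)}\}_{s\in[\tau^*_{t,x},T]})$, $\tau^*_{t,x}$ as in~\eqref{eqn:tau*}. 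Reporting these two alternatives is exactly the assertion, so the proof is complete.

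I do not expect a genuine obstacle: the statement is essentially Theorem~\ref{prop:pb_original_sol} read through the monotone dependence of $K^*$ on $q$. The only points needing a little care are the elementary estimate $e^{RT}-1>RT$, used to guarantee $\partial_q K^*<0$ on the whole admissible interval, and the bookkeeping for the two degenerate cases in which $K$ lies outside $\{K^*(q):q\in(p,\eta\sigma_0^2)\}$ and $q^*$ has to be placed at an endpoint of that interval; the behaviour of $K^*$ at the endpoints follows at once from the closed form~\eqref{eqn:K*}.
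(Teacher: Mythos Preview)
Your proposal is correct and runs parallel to the paper's own argument: both reduce the statement to Theorem~\ref{prop:pb_original_sol} by trading the threshold in $K$ for one in $q$ through the identity $H(0)=\eta e^{RT}\bigl(K-K^*(q)\bigr)$. The paper proceeds slightly differently in execution: it rewrites $H(0)\le 0$ directly as the quadratic inequality
\[
\frac{T}{2\sigma_0^2}q^2+(1-e^{RT})\frac{\eta}{R}\,q+\frac{\eta^2\sigma_0^2}{4R}(e^{2RT}-1)-\eta K e^{RT}\ge 0,
\]
computes the two roots $q_1<q_2$ by the quadratic formula, and observes that $q_2>\eta\sigma_0^2(e^{RT}-1)/(RT)>\eta\sigma_0^2$, so only $q_1$ is admissible and one sets $q^*=q_1$. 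Your monotonicity argument is the differential version of the same observation (the zero of $\partial_q K^*$ is exactly the vertex $\eta\sigma_0^2(e^{RT}-1)/(RT)$ of that parabola, and you use the identical estimate $e^{RT}-1>RT$ to place it above $\eta\sigma_0^2$). The paper's route gives a closed-form expression for $q^*$; yours is slightly more careful with the degenerate endpoint cases. Either way the content is the same.
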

\begin{proof}
Following Theorem \ref{prop:pb_original_sol} and its proof, we can write the condition $H(0)\le0$ as
%\[
%\frac{\eta q}{R} (e^{RT}-1) -  \frac{1}{2}\frac{q^2}{\sigma_0^2}T - \frac{1}{4 R} \eta^2\sigma^2_0 (e^{2RT} -1) +  \eta K e^{RT} \le 0
%\]
\[
\frac{T}{2\sigma_0^2}q^2 + (1-e^{RT})\frac{\eta}{R} q 
	+ \frac{\eta^2\sigma^2_0}{4 R}  (e^{2RT} -1) - \eta K e^{RT} \ge 0 .
\]
To simplify our computations, let us consider this inequality for any $q\in\mathbb{R}$. The discriminant $\Delta$ must be positive, otherwise the existence of $K^*>0$ in Theorem \ref{prop:pb_original_sol} is not guaranteed anymore. The solutions of the associated equations are
\[
q_{1,2} = \frac{\eta\sigma_0^2}{T}\bigl( \frac{e^{RT}-1}{R} \pm \sqrt{\Delta} \bigr), \qquad q_1<q_2.
\]
Since
\[
q_2 > \frac{\eta\sigma_0^2(e^{RT}-1)}{RT} > \eta\sigma_0^2,
\]
only $q_1$ is relevant because of the condition \eqref{eqn:qsmall}. That $q_1\in(0,+\infty)$ is a consequence of the existence of $K^*>0$ in Theorem \ref{prop:pb_original_sol}. If $q_1$ was not positive, then $H(0)>0$ for any value of $q>0$ and this would contradict Theorem \ref{prop:pb_original_sol}. Setting $q^*=q_1$ concludes the proof.
\end{proof}

The last result is interesting for the reinsurer. In Section \ref{sec:reins} we have already stated that the condition $q<\eta\sigma_0^2$ (see equation \eqref{eqn:qsmall}) is required in order that the reinsurance agreement is desirable. In presence of a fixed initial cost, now we know that there exists a threshold $q^*$, which is smaller than $\eta\sigma_0^2$, such that the insurer will never subscribe the contract if $q>q^*$.

\begin{Remark}
Recalling that $q = \theta\lambda\mu$ (see Section \ref{section:formulation}, we can give a deeper interpretation of the previous result. Indeed, we have proven the existence of a maximum safety loading $\theta^*>0$, which cannot be exceeded by the reinsurer, otherwise the reinsurance contract will not be subscribed.
\end{Remark}

%--------------------------------------------------------------------------------
%	SIMULATION
%--------------------------------------------------------------------------------

\section{Numerical simulations}
\label{sec:num}

In this section we use some numerical simulations in order to further investigate the results obtained in Section \ref{sec:solution}. Unless otherwise specified, all the simulations are performed according to the parameters of Table 1 below.

\begin{table}[h]
\centering
\begin{tabular}{lc}
\toprule
Parameter & Value \\
\midrule
$T$ 		& $10$ \\
$\eta$		& $0.5$ \\
$\sigma_0$	& $0.5$ \\
$q$			& $0.1$ \\
$R$			& $0.05$ \\
\bottomrule
\end{tabular}
\caption{Model parameters.}
\end{table}

\begin{figure}[h]
\centering
\subfloat[][\emph{Reinsurer's net profit}]
{\includegraphics[width=.45\textwidth]{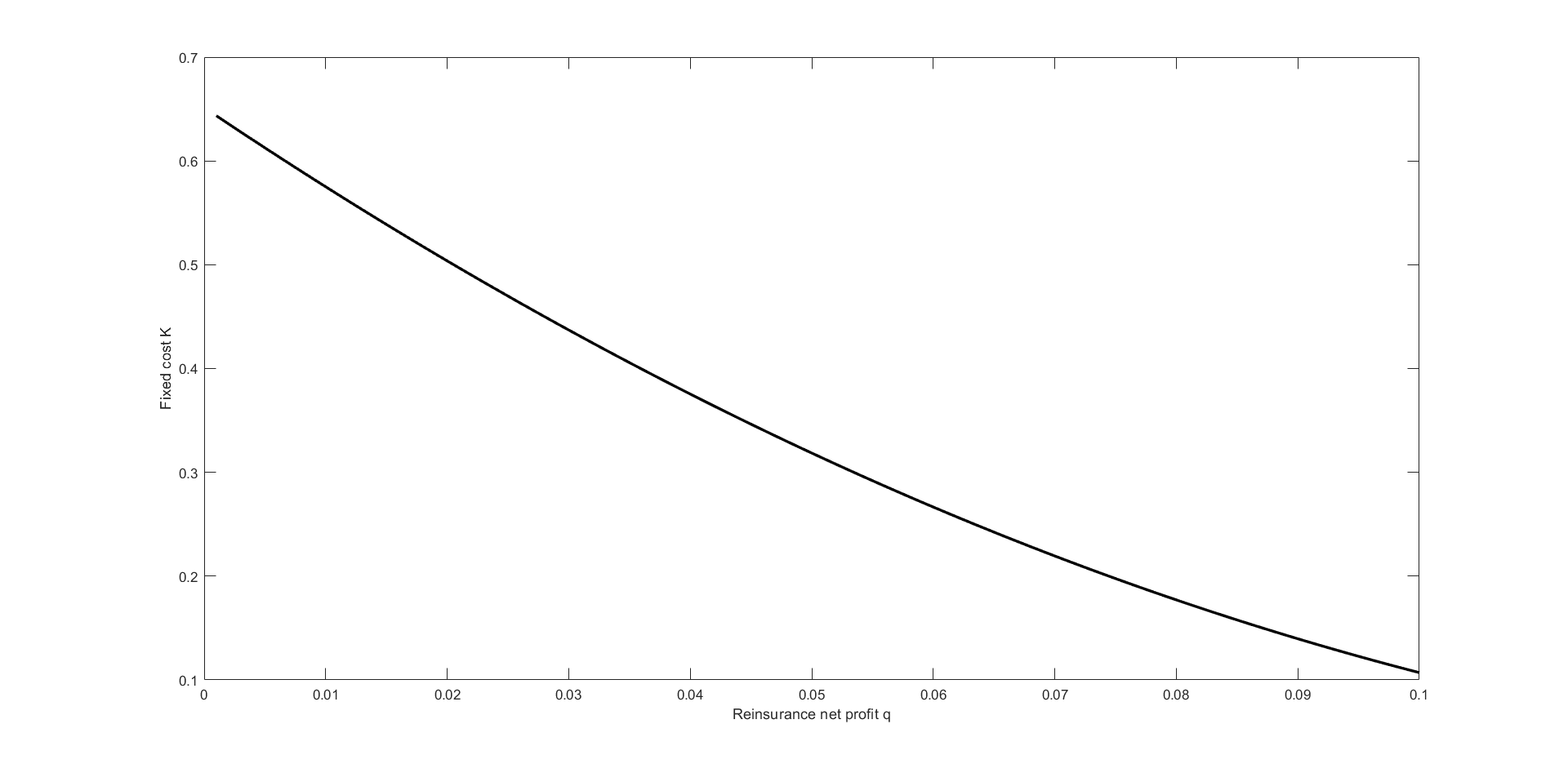}} \quad
\subfloat[][\emph{Risk aversion}]
{\includegraphics[width=.45\textwidth]{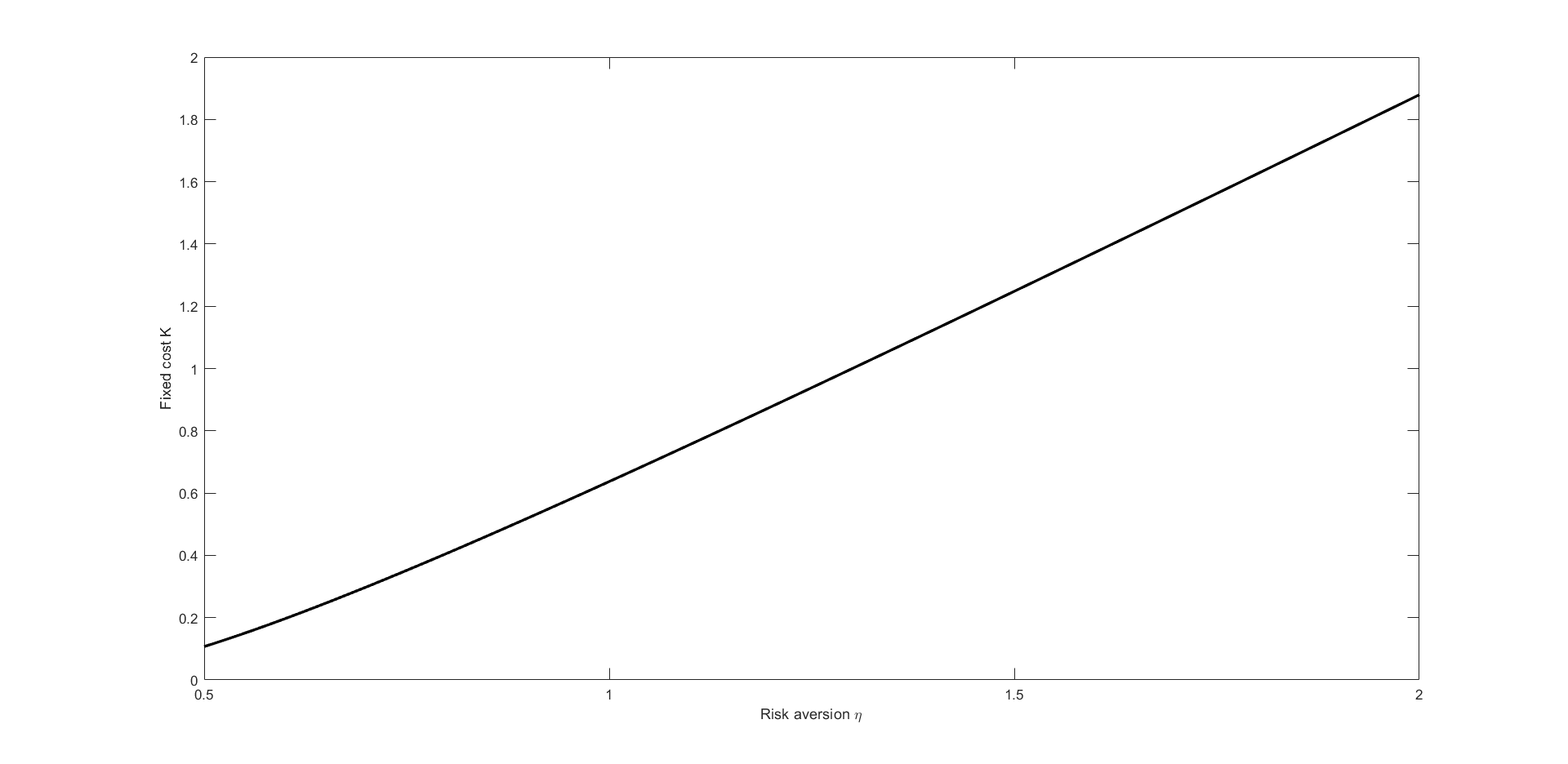}} \\
\subfloat[][\emph{Volatility}]
{\includegraphics[width=.45\textwidth]{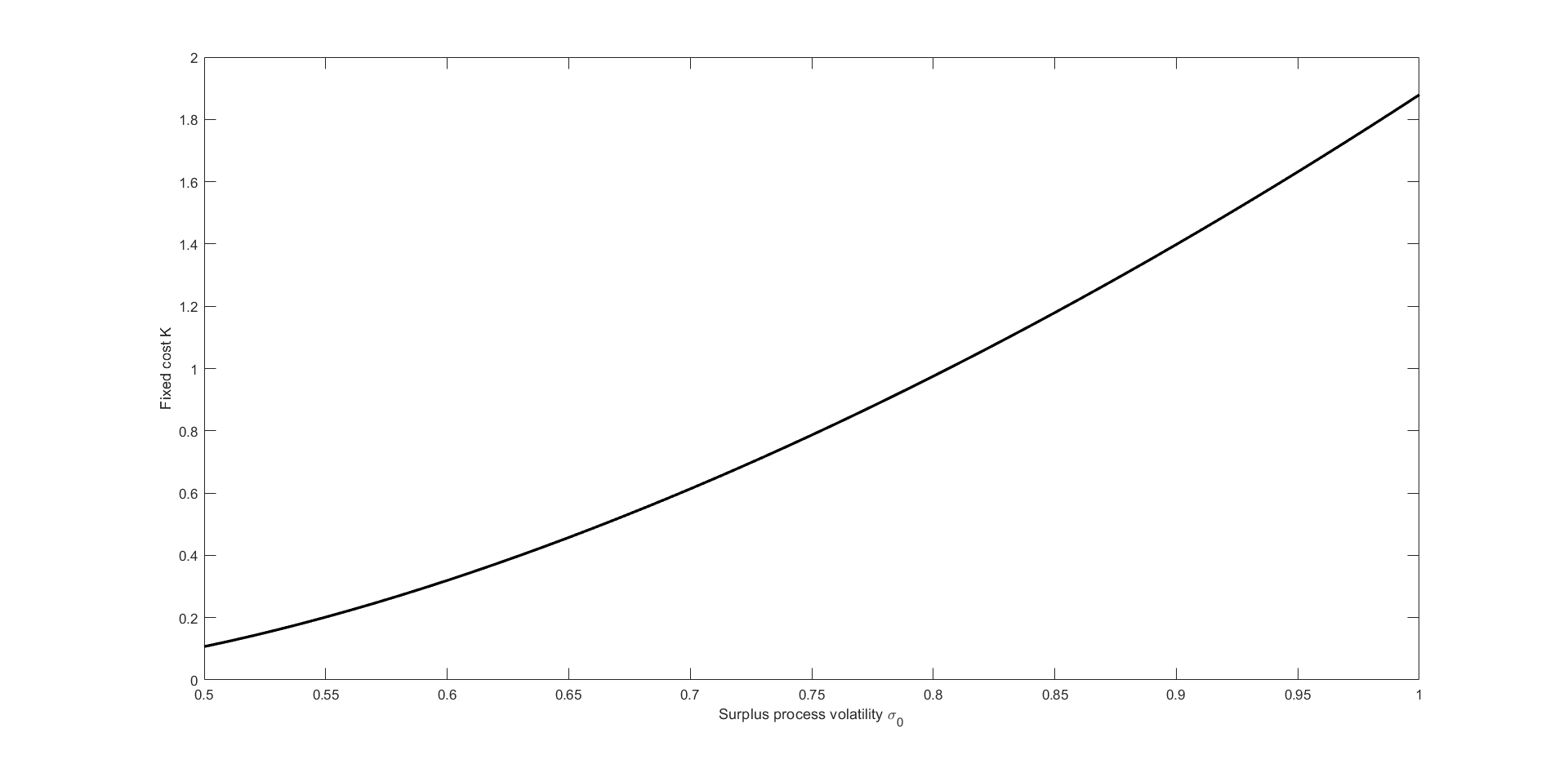}} \quad
\subfloat[][\emph{Time horizon}]
{\includegraphics[width=.45\textwidth]{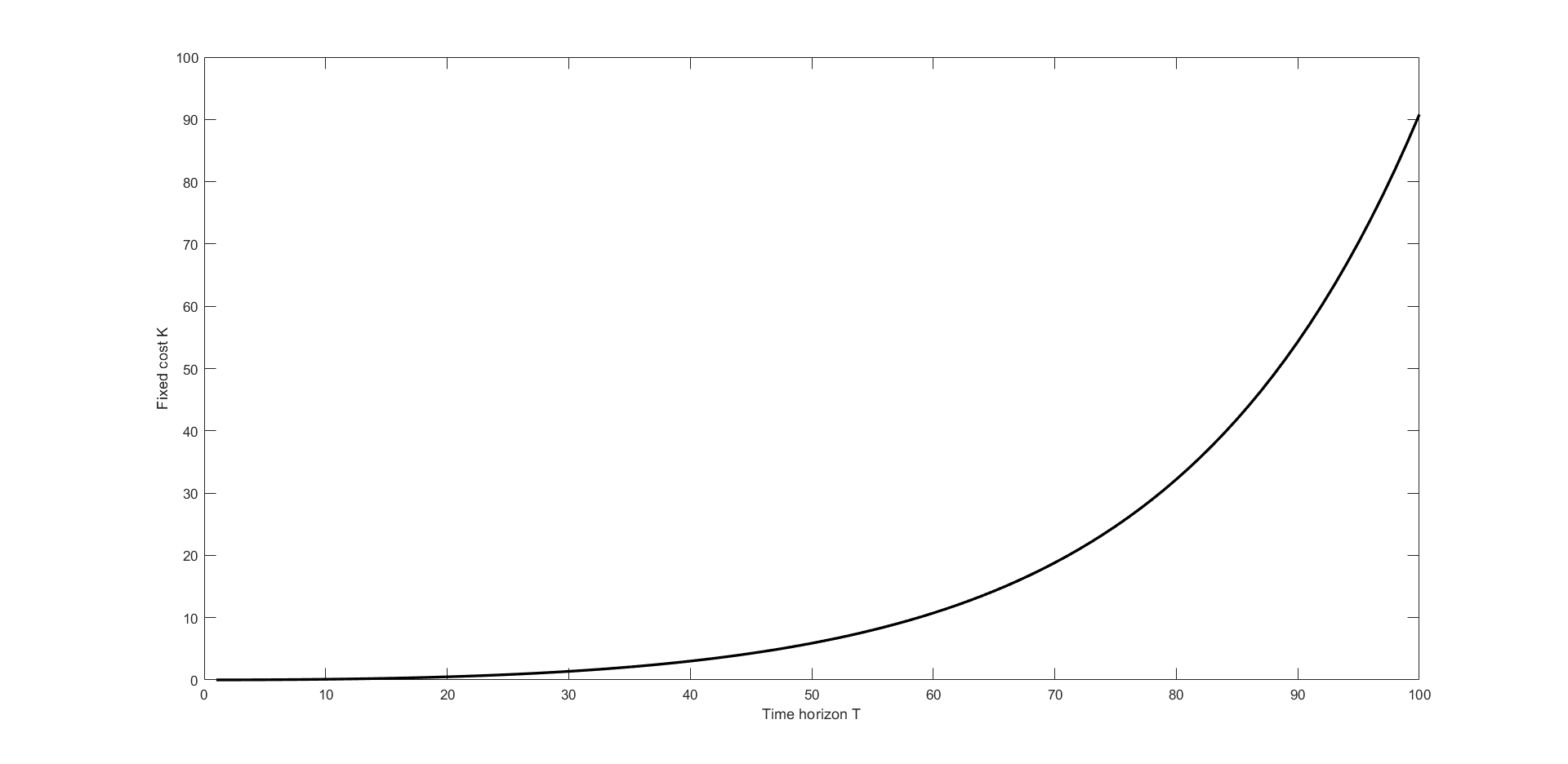}}
\caption{Sensitivity analyses with respect to the model parameters.}
\label{fig:K*}
\end{figure}

We have previously illustrated how the threshold $K^*$ in equation \eqref{eqn:K*} is relevant for the insurer as well as for the reinsurer. Figure \ref{fig:K*} shows how this threshold is influenced by the model parameters. As expected, if the reinsurer increases her net profit (for example, increasing the safety loading), then the fixed cost should decrease, see Figure \ref{fig:K*}a. When the insurer is more risk averse, she is willing to pay a higher fixed cost, see Figure \ref{fig:K*}b. The same applies when the potential losses increase, that is $\sigma_0$ is high, as in Figure \ref{fig:K*}c. Figure \ref{fig:K*}d shows that the more the time horizon, the higher will be $K^*$.

%--------------------------------------------------------------------------------
%	BIBLIOGRAPHY
%--------------------------------------------------------------------------------
\newpage
\bibliographystyle{apalike}
\bibliography{biblio}

\end{document}